\documentclass[aps,superscriptaddress,twocolumn,twoside,floatfix,pra,a4paper]{revtex4-2}

\usepackage[charter,cal=cmcal,sfscaled=false]{mathdesign}

\usepackage{dcolumn}
\usepackage{bm}
\usepackage{dsfont}
\usepackage{graphicx,epsfig}
\usepackage{amsmath}
\usepackage{mathtools}
\usepackage{braket}
\usepackage{physics}
\usepackage{caption}
\usepackage{float}
\usepackage{times}
\usepackage{soul}
\usepackage[shortlabels]{enumitem}

\usepackage{blindtext}
\usepackage{color}

\usepackage[colorlinks]{hyperref}
\hypersetup{
	colorlinks = true,
	urlcolor = {blue},
	citecolor = {blue},
	linkcolor= {blue}
}

\newcommand{\proj}[1]{\ket{#1}\bra{#1}}

\renewcommand{\eqref}[1]{Eq.~(\ref{#1})}

\newcommand{\bracket}[3]{\langle#1|#2|#3\rangle}

\newcommand{\mcE}{{\mathcal{E}}}
\newcommand{\mcN}{{\mathcal{N}}}
\newcommand{\AtoB}[2]{{#1\rightarrow#2}}

\newtheorem{theorem}{Theorem}

\newtheorem{lemma}[theorem]{Lemma}
\newtheorem{result}{Result}

\newcommand{\CY}[1]{{\color{black}#1}}
\newcommand{\CYtwo}[1]{{\color{black}#1}}
\newcommand{\CYSuppMat}[1]{{\color{black}#1}}

\begin{document}

\title{Resource theory of interactive quantum instruments}

\author{Chung-Yun Hsieh}
\email{chung-yun.hsieh@bristol.ac.uk}
\affiliation{H. H. Wills Physics Laboratory, University of Bristol, Tyndall Avenue, Bristol, BS8 1TL, United Kingdom}

\author{Armin Tavakoli}
\affiliation{Physics Department and NanoLund, Lund University, Box 118, 22100 Lund, Sweden}

\author{Huan-Yu Ku}
\affiliation{Department of Physics, National Taiwan Normal University, Taipei 11677, Taiwan}

\author{Paul Skrzypczyk}
\affiliation{H. H. Wills Physics Laboratory, University of Bristol, Tyndall Avenue, Bristol, BS8 1TL, United Kingdom}

\begin{abstract}
Quantum instruments describe both the classical outcome and the updated quantum state in a measurement process. To do this in a non-trivial way, instruments must have the capability to interact coherently with the state that they measure. Here, we develop a resource theory for instruments. We consider a relevant quantifier of the separation between interactive and non-interactive instruments and show that it admits three distinct operational interpretations in terms of quantum information tasks. These concern (i) the preservation of maximally entangled states after a local measurement, (ii) the average ability to preserve random states after measurement, and (iii) the ability to recover the classical information generated from measuring half of a maximally entangled state. We also introduce a natural set of allowed operations and show that the third task fully characterises the resource content of instruments. Our general framework reproduces as special cases established resource theories for channels and measurements.
\end{abstract}

\maketitle

\section{Introduction} 

Measurements, formally known as positive operator-valued measures (POVMs), are used to extract classical information from quantum systems.  In quantum information science, they are often viewed as resources \cite{ChitambarRMP2019}, that can have different capabilities depending on their specific features. For example, one may view as a resource measurements that cannot be simulated with projections~\cite{Khandelwal2025PRL}, sets of measurements that cannot be implemented jointly or without superposition~\cite{Otfried2021Rev,Hsieh-IP,Skrzypczyk2019,Armin2020,Buscemi2020PRL,Ji2024PRXQ,Takagi2019,Oszmaniec2019Quantum,Ducuara2020PRL,Uola2020PRL,Ducuara2022PRXQ,Ku2022NC,Uola2015PRL,Cavalcanti2016}, and measurements that require some type of coherent interaction with the state \cite{Skrzypczyk2019PRL}. 

However, measurement processes can be more complex than just producing classical information: fundamentally, they are also associated with a post-measurement state. Describing both the classical outcome and the quantum output in a measurement process requires a more general notion than POVMs called a quantum instrument. This richer and more complex structure has motivated many investigations into the capabilities of quantum instruments; see, e.g., Refs.~\cite{Hsieh2024PRL,Ji2024PRXQ,Ghai2025,Buscemi2023,Leppajarvi2024Quantum,Kuroiwa2024PRA}. 

The perhaps most basic feature needed to make a quantum instrument genuinely useful is that it is able to interact coherently with the state that it operates on. In other words,  the generation of the classical outcome and the associated quantum output in the measurement process cannot be separated from the incoming state. In view of this, it is natural to ask whether these non-trivial instruments, which we refer to as {\em interactive}, can be characterised, quantified and compared in operationally meaningful ways. While some progress has been made towards quantifying the interactive capabilities of instruments~\cite{Ghai2025}, a complete operational framework is missing.

Here, we develop a complete operational resource theory~\cite{ChitambarRMP2019} of interactive quantum instruments. We introduce a robustness measure to quantify the interactivity of instruments and then prove that this abstract quantity admits three different operational interpretations. Specifically, we show that it benchmarks the performance of instruments in three different quantum information tasks. These tasks are simple and non-contrived, and they all concern the recovery of different types of (quantum) information. Importantly, we show that one of them can further fully characterise whether one instrument is more resourceful than another under a natural set of allowed operations.  In addition, by recognising quantum channels and POVMs as types of instruments, our resource theory can, in these special cases, be reduced to established resource theories for these two concepts, thereby also reproducing key results in these frameworks.

\section{Preliminaries} 

A quantum state, $\hat{\rho}$, is a positive semi-definite unit-trace operator: $\hat{\rho}\succeq0$ and $\tr(\hat{\rho})=1$. A POVM is given by $\{E_a\}_a$ with $E_a\succeq0$ and $\sum_aE_a=\mathds{1}$. For any initial state $\hat{\rho}$, measuring the POVM $\{E_a\}_a$ yields the classical outcome $a$ with probability $p(a)=\tr(E_a\hat{\rho})$. After measurement, the state is updated into a post-measurement state conditioned on the outcome. To also describe this updated state, we need to go beyond POVMs and utilise quantum instruments. Formally, an instrument ${\bm\mcE}\coloneqq\{\mcE_a\}_a$ is a set of completely-positive trace-non-increasing linear maps with the property that $\sum_a \mcE_a $ is a channel (namely, a completely-positive trace-preserving linear map). For clarity, we use objects with ``hat'' (such as $\hat{\rho}$ and $\hat{\mcE}$) to denote normalised states or channels, and leave off hats for unnormalised objects. For any initial state $\hat{\rho}$, the instrument ${\bm\mcE}$ describes the process that generates the classical outcome $a$ and the quantum output $\mathcal{E}_a(\hat{\rho})/\tr[\mathcal{E}_a(\hat{\rho})]$ with the probability $p(a)=\tr[\mathcal{E}_a(\hat{\rho})]$. 

Quantum instruments encompass channels, measurements and states as special cases. Consider an instrument mapping a quantum state on system $A$ to a (sub-normalised) quantum state on system $B$. We write it as ${\bm\mcE}^{\AtoB{A}{B}}\coloneqq\{\mathcal{E}_a^{\AtoB{A}{B}}\}_{a=1}^{|\mathcal{A}|}$ with $|\mathcal{A}|$ possible classical outcome (when necessary, we use superscripts to clarify which systems the object is transforming to and from). When $|\mathcal{A}|=1$ (trivial classical outcome), ${\bm\mcE}^{\AtoB{A}{B}}$ is identical to a channel. When $d_B\coloneqq{\rm dim}(B)=1$ (trivial quantum output), each $\mcE_a^{\AtoB{A}{B}}$ is equivalent to a mapping $(\cdot)^A\mapsto\tr[E_a^A(\cdot)^A]$ for some POVM $\{E_a^A\}_a$ acting on $A$. Finally, when $d_A\coloneqq{\rm dim}(A)=1$ (trivial quantum input), ${\bm\mcE}^{\AtoB{A}{B}}$ becomes an ($|\mathcal{A}|$ element) ensemble of states on $B$. Hence, as we will also later encounter, general findings for instruments can provide insights also into channels, POVMs and states.

It is worth noting again that moving from POVMs and channels to instruments enables us to model the whole measurement process, as the formalism of instruments naturally includes both the classical outcome as well as the quantum output. At the level of POVMs, one can only extract the classical outcome and the associated probabilities, without any knowledge of the updated quantum state. On the other hand, one can write the average performance of a measurement process as a channel. For instance, for a POVM $\{E_a^A\}_{a=1}^{|\mathcal{A}|}$ acting on the system $A$, the channel
$
(\cdot)^A\mapsto\sum_a\sqrt{E_a^A}\,(\cdot)^A\,\sqrt{E_a^A}
$
outputs the quantum state averaged for individual measurement outcomes. 
Still, it cannot tell us the post-measurement state after {\em each} measurement round with a specified classical outcome. It turns out that the simplest possible way to model the full measurement process is through the notion of instruments.

\section{Interactive Instruments and Their Quantification}
\subsection{Defining interactive instruments} 
We are interested in characterising the extent to which an instrument is able to coherently interact with the state that it is applied to. To this end, we begin with defining instruments that cannot  interact with the state.  A non-interactive instrument must generate both the classical and quantum output independently of the state. Hence, the measurement process first discards the state. Then, it selects the classical outcome $a$ based on sampling from an internal random variable $\lambda$ via a distribution $q_\lambda$ and a probabilistic rule $p(a|\lambda)$. Finally, based on the sample $\lambda$, a new quantum state $\hat{\sigma}_{\lambda}$ is internally prepared and emitted in the role of the quantum output. We label {\em non-interactive instrument} by ${\bm{\mathcal{L}}}=\{\mathcal{L}_a\}_a$  and formally write them as
\begin{equation}\label{Eq:trivial instrument}
\mathcal{L}_a(\cdot)= \sum_\lambda q_\lambda p(a|\lambda) \hat{\sigma}_\lambda \tr\left(\cdot\right)\quad\forall\;a,
\end{equation} 
where we denote by ``$(\cdot)$'' the arbitrary input of the instrument. Notably, by defining the probability $r(a)\coloneqq\sum_\lambda q_\lambda p(a|\lambda)$ and the state $\hat{\tau}_a\coloneqq\sum_\lambda q_\lambda p(a|\lambda) \hat{\sigma}_\lambda/r(a)$, non-interactive instruments can be equivalently expressed as 
\begin{align}
\mathcal{L}_a(\cdot)= r(a)\hat{\tau}_a\tr\left(\cdot\right)\quad\forall\,a.
\end{align} 
See Fig.~\ref{Fig:NI} for the schematic illustration. We label by ${\bf NI}$ the set of non-interactive instruments and therefore consider any instrument ${\bm\mcE}\notin{\bf NI}$ to be interactive.

\subsection{Quantifying interactive instruments} 

We now define a quantifier of the interactivity of instruments. Since the set $\CY{\bf NI}$ is convex, one can construct this measure following a standard recipe for convex resource theories~\cite{ChitambarRMP2019}. Specifically, interactiveness can be faithfully quantified by  the \CY{\em interactive instrument robustness}   (see also Ref.~\cite{Ghai2025}),
\begin{align}\label{Eq:robustness def}
R({\bm\mcE})\coloneqq\min\left\{t\ge0\,\Big|\,\frac{{\bm\mcE}+t{\bm\mcN}}{1+t}\in\CY{\bf NI}\right\},
\end{align}
where we define $({\bm\mcE}+t{\bm\mcN})/(1+t)\coloneqq\{({\mcE_a}+t{\mcN_a})/(1+t)\}_a$, and the minimisaiton runs over all instruments ${\bm\mcN}$ and scalars $t\geq 0$.  We can immediately see that $R({\bm\mcE})=0$ if and only if ${\bm\mcE}\in{\bf NI}$.
Moreover, the larger is $R({\bm\mcE})$, the more ``noise'' (represented by ${\bm\mcN}$) one needs to add to make ${\bm\mcE}$ non-interactive.

Due to the structure of non-interactive instruments, we can represent $R({\bm\mcE})$ as {\em semi-definite program} (SDP) (see, e.g., Refs.~\cite{Boyd-book,SDP-textbook}). This makes it efficiently computable. 
Through the duality theory of SDPs, we arrive at the following characterisation (we discuss the SDP formulation and give the full proof in Appendix~\ref{App:SDP for robustness}):

\begin{result}
For any instrument ${\bm\mcE}^{\AtoB{A}{B}}$, the interactive instrument robustness is given by the SDP
\begin{equation}
\begin{split}\label{Eq: main dual}
R({\bm\mcE}^{\AtoB{A}{B}})=  \max_{\{\omega^{AB}_a\}_a}\quad& d_A\sum_a\tr\left( \omega_a^{AB}\left({\rm id}^A\otimes\mcE_a^{\AtoB{A}{B}}\right)(\Phi_+^{AA})\right)-1\\
  \text{s.t.} \quad& \tr_A(\omega_a^{AB})= \mathds{1}^B, \qquad \omega^{AB}_a \succeq 0,
\end{split}
\end{equation}
where ${\rm id}^X$ is the identity channel acting on a system $X$, and $\Phi_+^{AA}\coloneqq\ketbra{\Phi_+^{AA}}$ with $\ket{\Phi_+^{AA}}\coloneqq\left(1/\sqrt{d_A}\right)\sum_{i=0}^{d_A-1}\ket{i}^A\otimes\ket{i}^{A}$ is a maximally entangled state shared by two copies of $A$.
\end{result}
By using this  SDP representation, we will show that  $R({\bm\mcE})$ is not just an abstract quantifier of interactive instruments but that it admits  relevant  operational interpretations.

\begin{figure}
	\includegraphics[width=0.35\textwidth]{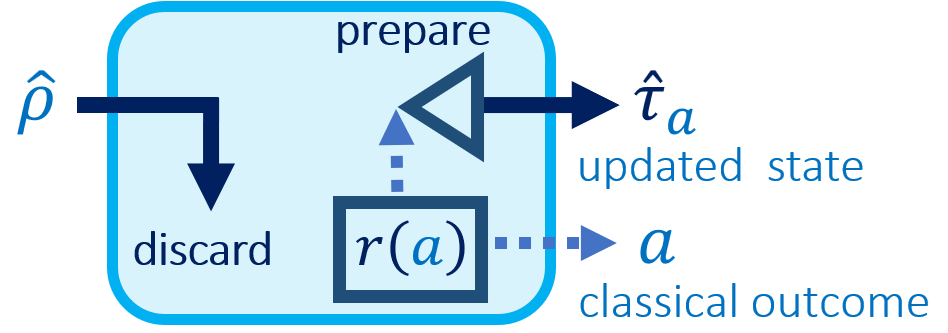}
	\caption{
	A non-interactive instrument that discards the input and prepares both the classical outcome $a$ and updated state $\hat{\tau}_a$ based on a pre-fixed distribution $r(a)$.
     	\label{Fig:NI}}
	\end{figure}

\begin{figure*}
    	\includegraphics[width=0.9\textwidth]{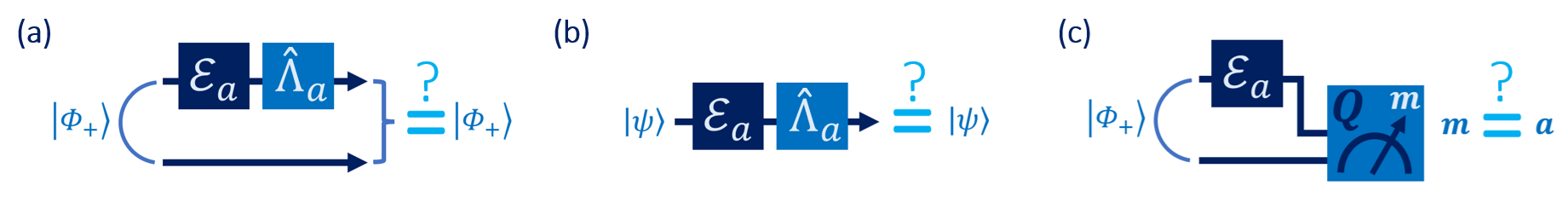}
	\caption{
	Schematic representations of the tasks for (a) measuring maximally entangled fraction, (b) benchmarking average quantum communication fidelity, and (c) entanglement-assisted unambiguous discrimination. Through the interactive instrument robustness, these three tasks are shown equivalent. 	\label{Fig:tasks}}
	\end{figure*}

\subsection{Interpretation 1: Maximally entangled fraction} 

The SDP in \eqref{Eq: main dual} enables a simple operational interpretation for the interactive instrument robustness. It corresponds to the largest average maximally entangled fraction that can be recovered after applying the instrument to one share of $\ket{\Phi_+^{AA}}$. To see this, consider that we are given $\ket{\Phi_+^{AA}}$ and measure half of it using the instrument ${\bm\mcE}^{\AtoB{A}{B}}$. The classical outcome is used to select a channel $\hat{\Lambda}_a^{\AtoB{B}{A}}$ whose job is to restore the state as close to $\ket{\Phi_+^{AA}}$ as possible. 
The average fidelity of this restoration is
\begin{equation}\label{Eq:F_+ def}
F_+({\bm\mcE}^{\AtoB{A}{B}})\coloneqq\max_{\{\hat{\Lambda}_a^{\AtoB{B}{A}}\}_a}\sum_a \bracket{\Phi_+^{AA}}{[{\rm id}^A\otimes(\hat{\Lambda}_a\circ\mcE_a)^A](\Phi_+^{AA})}{\Phi_+^{AA}},
\end{equation}
which maximises over all channels $\hat{\Lambda}_a^{\AtoB{B}{A}}$.
See also Fig.~\ref{Fig:tasks}(a).
Below, we show the relation between $F_+({\bm\mcE}^{\AtoB{A}{B}})$ and the interactive instrument robustness.\\

\begin{result}\label{Result:F_+}
For any instrument ${\bm\mcE}^{\AtoB{A}{B}}$ it holds that
\begin{align}
1+R({\bm\mcE}^{\AtoB{A}{B}}) = d_A^2\ F_+({\bm\mcE}^{\AtoB{A}{B}}).
\end{align}
\end{result}
\textit{Proof.} To see how \eqref{Eq:F_+ def} is related to the dual problem \eqref{Eq: main dual}, note that we can interpret $\omega^{AB}_a$ as the Choi operator of the {\em adjoint} of a channel. 
More precisely, each $\omega_a^{AB}\succeq0$ with $\tr_A(\omega_a^{AB})= \mathds{1}^B$ can be written as $\omega_a^{AB}=({\rm id}^B\otimes\Gamma^{\AtoB{A}{B}})(d_A\Phi_+^{AA})$ for some completely-positive {\em unital} linear map $\Gamma^{\AtoB{A}{B}}$ from $A$ to $B$.
This can be seen by tracing out the first copy of $A$ and noticing that $\Gamma^{\AtoB{A}{B}}(\mathds{1}_A)=\mathds{1}_B$.
As $(\Gamma^{\AtoB{A}{B}})^\dagger$ is trace-preserving, we can thus write $\Gamma^{\AtoB{A}{B}}=(\hat{\Lambda}_a^{\AtoB{B}{A}})^\dagger$ for some channel $\hat{\Lambda}_a^{\AtoB{B}{A}}$ mapping from $B$ to $A$.
Substituting this into \eqref{Eq: main dual} and using the definition of the adjoint map thus completes the proof.
\hfill$\square$\\

Hence, the more interactive an instrument is, the better it allows for the maximally entangled state to be locally restored after measurement. Moreover, non-interactive instruments cannot exceed $F_{+}({\bm{\mathcal{L}}}^{\AtoB{A}{B}})=1/{d_A^2}$. One can therefore interpret  $1+R({\bm\mcE}^{\AtoB{A}{B}})$ as the task performance ratio between a given instrument and arbitrary non-interactive ones. 
Finally, Result~\ref{Result:F_+} implies 
\begin{align}
R({\bm\mcE}^{\AtoB{A}{B}})\le d_A^2-1,
\end{align}
and the maximum can be achieved when $\mcE^{\AtoB{A}{B}}_a=p_a\mathcal{U}_a^{\AtoB{A}{B}}$ for some probability $p_a$ and reversible channels $\mathcal{U}_a^{\AtoB{A}{B}}$. These maximally interactive instruments are those that can be undone conditioned on classical outcomes.

\subsection{Interpretation 2: Fidelity of quantum communication}

We now present a second operational interpretation of the interactive instrument robustness. This time, it does not rely on entangled states but only on randomly selected single systems. Consider that we sample a random pure state $\psi^A\coloneqq\ketbra{\psi^A}$ and measure it using the  instrument ${\bm\mcE}^{\AtoB{A}{B}}$. 
The classical outcome, $a$, is used to select a channel $\hat{\Lambda}_a^{\AtoB{B}{A}}$ whose job is to minimise the disturbance in the quantum state caused by the instrument. Specifically, we want to maximise the fidelity between the pre- and post-measurement quantum state. On average, this fidelity corresponds to [see also Fig.~\ref{Fig:tasks}(b)]
\begin{equation}\label{Eq:ave fidelity main}
F_{\rm ave}({\bm\mcE}^{\AtoB{A}{B}})\coloneqq\max_{\{\hat{\Lambda}_a^{\AtoB{B}{A}}\}_a}\int\!\! d\psi\ \bracket{\psi^A}{\sum_a (\hat{\Lambda}_a\circ \mcE_a)^A(\psi^A)}{\psi^A},
\end{equation}
where the maximisation is over the channels $\hat{\Lambda}_a^{\AtoB{B}{A}}$. In Appendix~\ref{App:Proof-Result:F_ave R} we prove that this is in one-to-one correspondence with the interactive instrument robustness through the following result.

\begin{result}\label{Result:F_ave R}
For any instrument ${\bm\mcE}^{\AtoB{A}{B}}$, it holds that
\begin{equation}
F_{\rm ave}({\bm\mcE}^{\AtoB{A}{B}})=\frac{1}{d_A}\left(1+\frac{R\left({\bm\mcE}^{\AtoB{A}{B}}\right)}{d_A+1}\right).
\end{equation}
\end{result}
Note that for non-interactive instruments, we obtain the trivial fidelity $F_{\rm ave}({\bm\mcE}^{\AtoB{A}{B}})=1/d_A$. Furthermore, by combining Result~\ref{Result:F_+} and Result~\ref{Result:F_ave R}, we obtain
\begin{equation}\label{Eq: relating F_+ and F_ave 001}
F_{\rm ave}({\bm\mcE}^{\AtoB{A}{B}})=\frac{1+d_AF_+\left({\bm\mcE}^{\AtoB{A}{B}}\right)}{d_A+1}.
\end{equation}
This shows that while the two tasks are operationally distinct, their performance  metrics are in one-to-one correspondence.

\subsection{Interpretation 3: Entanglement-assisted unambiguous discrimination} 

We now discuss a third operational interpretation, which we will afterwards extend further into the complete resource theory of interactive instruments.

Consider that a sender and a receiver share the maximally entangled state $\ket{\Phi_+^{AA}}$. The sender measures using a  given instrument ${\bm\mcE}^{\AtoB{A}{B}}=\{\mcE_a^{\AtoB{A}{B}}\}_{a=1}^{|\mathcal{A}|}$, with $|\mathcal{A}|$ possible classical outcomes. She keeps the classical outcome while relaying the quantum output to the receiver.  Conditioned on the outcome $a$, which is not accessible to the receiver, the bipartite state is $\hat{\rho}_a^{AB}\coloneqq({\rm id}^A\otimes\mathcal{E}_a^{\AtoB{A}{B}})(\Phi_+^{AA})/q(a)$ with the probability $q(a)\coloneqq\tr\left[({\rm id}^A\otimes\mathcal{E}_a^{\AtoB{A}{B}})(\Phi_+^{AA})\right]$. The receiver's goal is to learn the value of $a$ with the largest possible probability, without ever declaring a wrong value. 
In other words, the classical outcome must be unambiguously discriminated by the receiver with a probability as high as possible. To that end, the receiver jointly measures system  $AB$ with a POVM ${\bm Q}^{AB}\coloneq\{Q_{m}^{AB}\}$ with $m=1,...,|\mathcal{A}|,\emptyset$ that has $|\mathcal{A}|+1$ possible outcomes. The extra outcome ($\emptyset$) corresponds to an inconclusive round, which the receiver simply discards.
See Fig.~\ref{Fig:tasks}(c) for illustration.
Hence, the probability of success is given by the standard figure-of-merit for unambiguous discrimination
\begin{align}\label{Eq: P_succ}
P_{\rm succ}({\bm\mcE}^{\AtoB{A}{B}},{\bm Q}^{AB})=\sum_{a=1}^{|\mathcal{A}|}\tr\left[Q_a^{AB}({\rm id}^A\otimes\mcE_a^{\AtoB{A}{B}})(\Phi_+^{AA})\right],
\end{align} 
while the unambiguity means $\tr\left[Q_{m}^{AB}({\rm id}^A\otimes\mcE_a^{\AtoB{A}{B}})(\Phi_+^{AA})\right]=0$ for any $a\neq m \in\{1,\ldots,|\mathcal{A}|\}$. Then, as detailed in Appendix~\ref{App:Proof-Result: three operational meanings}, $P_{\rm succ}$ is related to the interactive instrument robustness as follows.\\

\begin{result}\label{Result: three operational meanings}
For every instrument ${\bm\mcE}^{\AtoB{A}{B}}$ it holds that
\begin{align}\label{Eq: equivalence relation between three tasks}
1+R\left({\bm\mcE}^{\AtoB{A}{B}}\right)=\max_{{\bm Q}^{AB}}\frac{P_{\rm succ}({\bm\mcE}^{\AtoB{A}{B}},{\bm Q}^{AB})}{\max_{{\bm{\mathcal{L}}}^{\AtoB{A}{B}}}P_{\rm succ}({\bm{\mathcal{L}}}^{\AtoB{A}{B}},{\bm Q}^{AB})},
\end{align}
where $\max_{\bf Q}$ is over all POVMs  and $\max_{{\bm{\mathcal{L}}}}$ is over all non-interactive instruments with the same number of classical outcomes.
\end{result}

Hence, the above result has uncovered a third operational meaning of $R({\bm\mcE})$, but in contrast to the previous two, as the task is now focused on the recovery of classical information. We remark that for a given POVM \CY{${\bm Q}^{AB}$}, the  success probability for non-interactive instruments can be explicitly evaluated by  
\begin{align}
P_{\rm succ}({\bm{\mathcal{L}}}^{\AtoB{A}{B}},{\bm Q}^{AB})=\max_{a\in\{1,...,|\mathcal{A}|\}}\frac{\norm{\tr_A\left(Q_a^{AB}\right)}_\infty}{d_A}.
\end{align}

\section{Resource characterisation} 

\subsection{Defining allowed operations}\label{Sec:Allowed Operation}

So far, we have solved the membership problem of the set of non-interactive instruments and uncovered several operational meanings of the interactive instrument robustness. A natural next step is to introduce a reasonable set of allowed operations that can be performed on instruments. By identifying the conditions under which one instrument can be converted to another under these operations, we obtain a complete resource theory.

We first motivate the selection of allowed operations. It consists in quantum pre-processing, classical post-processing and quantum post-processing of a given instrument. Specifically, consider a generic input state $\hat{\rho}_{\rm in}$.
For a given instrument ${\bm\mcE}$, the agent can sample $\lambda$ from a probability distribution  $q(\lambda)$ and then apply some  channel $\hat{\mathcal{P}}_\lambda$. This can be viewed as quantum pre-processing. Then, the agent may apply the given instrument, generating a classical outcome $k$ and a quantum output $\hat{\rho}_{k|\lambda} \coloneqq (\mcE_k\circ\hat{\mathcal{P}}_\lambda)(\hat{\rho}_{\rm in})/Q(k|\lambda)$ with the probability $Q(k|\lambda)\coloneqq\tr[(\mcE_k\circ\hat{\mathcal{P}}_\lambda)(\hat{\rho}_{\rm in})]$.
At this stage, the agent is allowed to classically process $k$ to generate the final classical outcome. This means employing some probability distribution $P(a|k,\lambda)$. Finally, the agent is allowed to perform quantum post-processing by  applying to $\hat{\rho}_{k|\lambda}$ some channel $\hat{\mathcal{D}}_{k|\lambda}$ conditioned on $k$ and $\lambda$. 
In summary, the allowed operations, when conditioned on outcome $a$, correspond to transformations $\hat{\rho}_\text{in}\rightarrow \mathcal{N}_a(\hat{\rho}_{\rm in})$  where (see Fig.~\ref{Fig:AO} for illustration)
\begin{align}\label{free op}
\mathcal{N}_a(\hat{\rho}_{\rm in})=\sum_{\lambda,k}q(\lambda)P(a|k,\lambda)(\hat{\mathcal{D}}_{k|\lambda}\circ\mcE_k\circ\hat{\mathcal{P}}_\lambda)(\rho_{\rm in}).
\end{align}
For a given $\bm \mcE$, if the above transformation is possible for arbitrary input states, we write ${\bm\mcN} = \Pi({\bm\mcE})$, where $\Pi$ is the appropriate supermap for the conversion.
We label the set of all allowed operations by ${\bf AO}$.

We now discuss the mathematical property of the set ${\bf AO}$.
First, as one can directly observe, ${\bf AO}$ is by definition convex.
In fact, by considering fixed input and output dimensions, it is also compact (in a topology induced by operator norm; see Lemma~\ref{lemma:compact} in Appendix~\ref{App:Proof-Result:conversion iff conditions} for details).
Physically, this means that if a transformation can be approximated by allowed operations to {\em arbitrary precision}, then such a manipulation is also allowed.

\begin{figure}
	\includegraphics[width=0.3\textwidth]{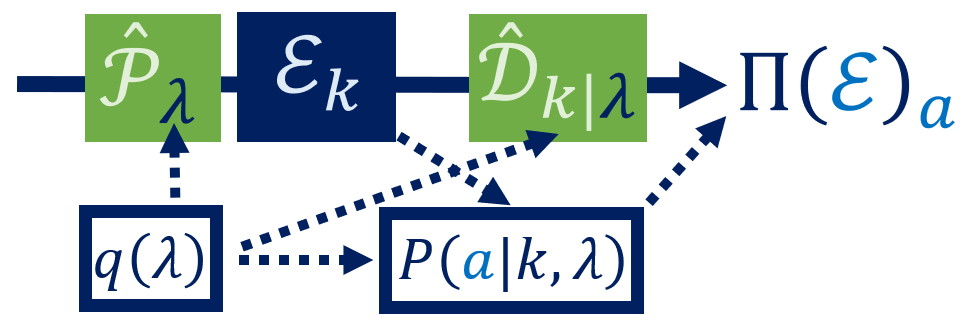}
	\caption{
	\CY{An allowed operation $\Pi$ acting on ${\bm\mcE}$. Solid/dashed lines represent quantum/classical processing.} 	\label{Fig:AO}}
	\end{figure}

To examine the relation between non-interactive instruments ${\bf NI}$ and allowed operations ${\bf AO}$, it is straightforward to see that 
\begin{align}
\Pi({\bm{\mathcal{L}}})\in{\bf NI}\quad\forall\,{\bm{\mathcal{L}}}\in{\bf NI}\;\&\;\Pi\in{\bf AO}.
\end{align} Thus, the allowed operations cannot convert resourceless objects to resourceful objects.  If there is some $\Pi\in{\bf AO}$ such that $\Pi({\bm\mcE})={\bm\mcN}$, then we say that one can simulate the instrument ${\bm\mcN}$ with the instrument  ${\bm\mcE}$, i.e.,~one can view ${\bm\mcE}$ as not less resourceful than ${\bm\mcN}$. A rather expected consequence of this is that the interactive instrument robustness is non-increasing under allowed operations (see also Ref.~\cite{Ghai2025} for a more complete treatment on quantifying instrument-based resources): 
\begin{align}\label{Eq: non-increasing property}
R[\Pi({\bm\mcE})]\le R({\bm\mcE})\quad\forall\;{\bm\mcE}\;\&\;\Pi\in {\bf AO}.
\end{align}
\noindent{\em Proof of \eqref{Eq: non-increasing property}.}
Throughout the proof, we ignore the superscripts, and we use the notation ``$\Pi({\bm\mcE})_a$'' to denote the $a$-th element of the instrument $\Pi({\bm\mcE})$, and we define $p{\bm\mcE}+(1-p){\bm\mcN}\coloneqq\{p\mcE_a+(1-p)\mcN_a\}_a$.
From 
\eqref{Eq:robustness def}, 
we have
\begin{align}\nonumber
R({\bm\mcE})\coloneqq  \quad \min &\quad t\\
  \text{s.t.} &\quad \frac{{\mcE}_a(\cdot)+t \mathcal{N}_a(\cdot)}{1+t}= r(a)\hat{\tau}_a \tr\left(\cdot\right)\quad\forall\,a.
\end{align}
Write $t_{\rm opt}=R({\bm\mcE})$.
By applying the allowed operation $\Pi$ to the above constraint and using the fact that $\Pi$ must map a non-interactive instrument to some non-interactive instrument, we conclude that $\Pi\left[({\bm\mcE}+t_{\rm opt} {\bm\mcN})/(1+t_{\rm opt})\right]$ again form a non-interactive instrument.
Hence, $t_{\rm opt}$ is a feasible (not necessarily optimal) solution to the following minimisation, meaning that $t_{\rm opt}$ is lower bounded by it:
\begin{align}\nonumber
\quad \min &\quad t\\
  \text{s.t.} &\quad \Pi\left(\frac{{\bm\mcE}+t {\bm\mcN}}{1+t}\right)_a(\cdot)= r'(a)\hat{\tau}'_a \tr\left(\cdot\right)\quad\forall\,a.
\end{align}
Since $\Pi$ is linear, we can write 
\begin{align}
\Pi\left(\frac{{\bm\mcE}+t {\bm\mcN}}{1+t}\right)=\frac{\Pi({\bm\mcE})+t \Pi({\bm\mcN})}{1+t}, 
\end{align}
and $\Pi({\bm\mcN})$ is still a valid instrument.
By enlarging the optimisation range, the above minimisation is sub-optimal to (and hence lower bounded by) the following minimisation, which is exactly $R[\Pi({\bm\mcE})]$:
\begin{align}\nonumber
R[\Pi({\bm\mcE})]\coloneqq\quad \min &\quad t\\
  \text{s.t.} &\quad \frac{\Pi({\bm\mcE})_a(\cdot)+t \mathcal{N}'_a(\cdot)}{1+t}= r'(a)\hat{\tau}'_a \tr\left(\cdot\right)\;\forall\,a,
\end{align}
where the above minimisation allows all possible instruments $\mcN_a'$ rather than ones of the form $\Pi({\bm\mcN})_a$.
Combining everything, we have thus proved that $R({\bm\mcE})=t_{\rm opt}\ge R[\Pi({\bm\mcE})]$, which completes the proof.
\hfill$\square$

\subsection{Characterising instrument conversions}

We are now ready to give a full characterisation of instrument conversions under the allowed operations. To this end, we return to the third interpretation of the interactive instrument robustness and the quantity in \eqref{Eq: P_succ}. For a given instrument ${\bm\mcE}^{\AtoB{A}{B}}=\{\mcE_a^{\AtoB{A}{B}}\}_{a=1}^{|\mathcal{A}|}$ and POVM ${\bm Q}^{AB}\coloneq\{Q_{m}^{AB}\}$ with $m=1,...,|\mathcal{A}|,\emptyset$, the largest achievable success probability under the allowed operations is
\begin{align}
P_{\rm max}({\bm\mcE}^{\AtoB{A}{B}},{\bm Q}^{AB})\coloneqq\max_{\Pi\in{\bf AO}^{\AtoB{A}{B}}}P_{\rm succ}\left[\Pi({\bm\mcE}^{\AtoB{A}{B}}),{\bm Q}^{AB}\right],
\end{align}
which maximises over allowed operations that will map ${\bm\mcE}^{\AtoB{A}{B}}$ to some instrument that is again from $A$ to $B$ (denoted by ${\bf AO}^{\AtoB{A}{B}}$).
In Appendix~\ref{App:Proof-Result:conversion iff conditions}, we prove that this quantity characterises instrument conversions as follows.

\begin{result}\label{Result:conversion iff conditions}
For instruments ${\bm\mcE}^{\AtoB{A}{B}}$ and ${\bm\mcN}^{\AtoB{A}{B}}$, there is an allowed operation $\Pi\in{\bf AO}^{\AtoB{A}{B}}$ achieving $\Pi({\bm\mcE}^{\AtoB{A}{B}})={\bm\mcN}^{\AtoB{A}{B}}$ if and only if  
\begin{align}\label{Eq:iff}
 P_{\rm max}({\bm\mcE}^{\AtoB{A}{B}},{\bm Q}^{AB})\ge P_{\rm max}({\bm\mcN}^{\AtoB{A}{B}},{\bm Q}^{AB}) \quad \forall\;{\bm Q}^{AB}.
\end{align}
\end{result}
In other words, ${\bm\mcE}^{\AtoB{A}{B}}$ can simulate ${\bm\mcN}^{\AtoB{A}{B}}$ if and only if the former outperforms the latter for any choice of \CY{POVM ${\bm Q}^{AB}$} in the \CYtwo{unambiguous discrimination} task associated with \eqref{Eq: P_succ}.

\section{Examples and Special Cases}
\subsection{Special case: channels}

Since quantum instruments generalise both channels and POVMs, we discuss how our resource theory applies to these special cases. Beginning with the former, consider that our instrument only has one possible classical outcome (i.e.~$|\mathcal{A}|=1$), which reduces it to a channel. Then, the non-interactive instrument defined in \eqref{Eq:trivial instrument} becomes a state-preparation channel, namely $(\cdot)\mapsto\hat{\rho}\tr(\cdot)$, for some fixed state $\hat{\rho}$. The set of non-interactive instruments ${\bf NI}$ becomes the set of all state-preparation channels, and the interactive instrument robustness simplifies to have no dependency on the classical outcome. This is precisely what is encountered in the so-called resource theory of communication developed in Ref.~\cite{Takagi2020PRL}.  

Furthermore, the quantity $F_+(\hat{\mathcal{E}}^{\AtoB{A}{B}})$ in \eqref{Eq:F_+ def} becomes the maximally entangled fraction of the channel's Choi state \cite{Horodecki1999,Konig2009ITIT}. 
This is also related to the min-entropy of this Choi state.
To see this, recall from Definition 1 in Ref.~\cite{Konig2009ITIT} that the min-entropy (of $A$ conditioned on $B$) of $\hat{\rho}^{AB}$ reads
\begin{align}
H_{\rm min}(A|B)_{\hat{\rho}^{AB}}\coloneqq-\min_{\hat{\sigma}^B}D_{\rm max}(\hat{\rho}^{AB}\,\|\,\mathds{1}^A\otimes\hat{\sigma}^B),
\end{align}
which minimises over generic states $\hat{\sigma}^B$ in $B$. Moreover, for two states $\rho,\eta$, their max-relative entropy~\cite{Datta2009ITIT} reads
\begin{align}
D_{\rm max}(\rho\,\|\,\eta)\coloneqq\log_2\min\{\lambda\ge0\,|\,\rho\le\lambda\eta\}.
\end{align}
This thus implies, for $\hat{\rho}^{AB}=({\rm id}^A\otimes\hat{\mcE}^{\AtoB{A}{B}})(\Phi_+^{AA})$,
\begin{align}
2^{-H_{\rm min}(A|B)_{\hat{\rho}^{AB}}}&=\frac{1}{d_A}\min\left\{\lambda\ge0\,\Big|\,\hat{\rho}^{AB}\le\lambda\frac{\mathds{1}^A}{d_A}\otimes\hat{\sigma}^B,\hat{\sigma}:{\rm state}\right\}\nonumber\\
&=\frac{1+R(\hat{\mcE}^{\AtoB{A}{B}})}{d_A}=d_A\ F_+({\bm\mcE}^{\AtoB{A}{B}}),
\end{align}
where we have used Result~\ref{Result:F_+} in the last line.
This is a well-known relation for establishing the operational meaning of min-entropy~\cite{Konig2009ITIT}. Moreover, using \eqref{Eq: relating F_+ and F_ave 001}, we further see that the maximally entangled fraction of the Choi state ($F_+$) is determined by the quantity 
\begin{align}
F_{\rm ave}(\hat{\mcE}^{\AtoB{A}{B}})\coloneqq\max_{\hat{\Lambda}^{\AtoB{B}{A}}}\int d\psi \bracket{\psi^A}{(\hat{\Lambda}\circ \hat{\mcE})^A(\psi^A)}{\psi^A},
\end{align} 
which is the optimal average fidelity of $\hat{\mcE}^{\AtoB{A}{B}}$'s ability to send quantum information. When $d_A=d_B$, this is exactly the well-known relation between Choi state's fully entangled fraction and average fidelity reported in Refs.~\cite{Horodecki1999,Nielsen2002PLA,Buscemi2010ITIT}.


\subsection{Special case: POVMs} 

Instruments reduce to POVMs if we discard the quantum output, i.e.~select $d_B=1$. This is expressed as \mbox{$\mcE_a^{\AtoB{A}{B}}(\cdot)\coloneqq\tr[E_a^A(\cdot)^A]$} 
for some POVM ${\bm E}^A=\{E_a^A\}_a$.
Then, a non-interactive instrument takes the form
$
\sum_\lambda q_\lambda P(a|\lambda)\tr(\cdot)=\tr\left[\left(\sum_\lambda q_\lambda P(a|\lambda)\mathds{1}\right)(\cdot)\right]$ $\forall\,a
$, which means that the system is discarded and the outcome selected by an internal random process. This is precisely the type of POVMs considered in the resource theory of measurements introduced in Ref.~\cite{Skrzypczyk2019PRL}. The interactive instrument robustness reduces to the {\em robustness of measurements} (denoted by $R_{\rm POVM}$), which, for any POVM ${\bm M}=\{M_a\}_a$, is defined as~\cite{Skrzypczyk2019PRL}
\begin{equation}\label{Eq: robustness POVM}
\begin{split}
R_{\rm POVM}({\bm M})\coloneqq  \min &\quad t\\
  \text{s.t.} &\quad \frac{M_a+tN_a}{1+t}=r(a)\mathds{1}\quad\forall\,a,
\end{split}
\end{equation}
which minimises over the scalar $t\ge0$ and generic POVM ${\bm N}$ acting on the same system.
Now, for a given instrument ${\bm\mcE}^{\AtoB{A}{B}}$ with a trivial output space (i.e., $d_B=1$), suppose it is expressed by a POVM ${\bm E}^A=\{E_a^A\}_a$ as $\mcE^{\AtoB{A}{B}}_a(\cdot)=\tr[E_a^A(\cdot)^A]$ $\forall\,a$.
Then, we aim to prove that
\begin{align}\label{Eq:relating R and R_POVM}
R({\bm\mcE}^{\AtoB{A}{B}})=R_{\rm POVM}({\bm E}^A).
\end{align}

\noindent\textit{Proof of \eqref{Eq:relating R and R_POVM}.---} Using \eqref{Eq:robustness def}, the interactive instrument robustness $R({\bm\mcE}^{\AtoB{A}{B}})$ is the following optimisation:
\begin{equation}\label{Eq: App_B 001}
\begin{split}
\min &\quad t\\
  \text{s.t.} &\quad \frac{\tr[E_a^A(\cdot)]+t \mathcal{N}_a^{\AtoB{A}{B}}(\cdot)}{1+t}= \tr\left[r(a)\mathds{1}^A(\cdot)\right]\quad\forall\,a,
\end{split}
\end{equation}
which miminises over probability distribution $\{r(a)\}_a$, scalar $t\ge0$, and generic instrument $\{\mcN_a^{\AtoB{A}{B}}\}_a$.
Now, we note that the above constraint implies
\begin{align}
\mathcal{N}_a^{\AtoB{A}{B}}(\cdot)=\tr\left[\left(\frac{(1+t)r(a)\mathds{1}^A-E_a^A}{t}\right)(\cdot)\right]\quad\forall\,a.
\end{align}
We further note that $[(1+t)r(a)\mathds{1}^A-E_a^A]/t\succeq0$ $\forall\,a$ since $\mathcal{N}_a^{\AtoB{A}{B}}$ is completely-positive.
Also, since $E_a^A$'s form a POVM, we have that $\sum_a[(1+t)r(a)\mathds{1}^A-E_a^A]/t=\mathds{1}^A$.
This means we can write $\mathcal{N}_a^{\AtoB{A}{B}}(\cdot)=\tr[N_a^A(\cdot)]$ for some POVM $\{N_a^A\}_a$, and \eqref{Eq: App_B 001} becomes
\begin{equation}
\begin{split}
\min &\quad t\\
  \text{s.t.} &\quad \tr[\left(\frac{E_a^A+tN_a^A}{1+t}-r(a)\mathds{1}^A\right)(\cdot)]=0\quad\forall\,a.
\end{split}
\end{equation}
For an operator $\Delta$, the condition that $\tr[\Delta(\cdot)]=0$ (i.e., it is the zero map for {\em all} states) implies $\Delta=0$.
Consequently, we obtain
\begin{equation}
\begin{split}
R({\bm\mcE}^{\AtoB{A}{B}})=  \min &\quad t\\
  \text{s.t.} &\quad \frac{E_a^A+tN_a^A}{1+t}=r(a)\mathds{1}^A\quad\forall\,a,
\end{split}
\end{equation}
which is exactly $R_{\rm POVM}({\bm E}^A)$. 
This concludes the proof.
\hfill$\square$\\

As a corollary, using Result~\ref{Result:F_+} and \eqref{Eq:relating R and R_POVM}, we obtain
\begin{align}
1+R_{\rm POVM}({\bm E}^A)&=d_A^2F_+({\bm\mcE^{\AtoB{A}{B}}})\nonumber\\
&=d_A\max_{\{\hat{\rho}^A_a\}_a}\sum_a \bracket{\Phi_+^{AA}}{\left(E_a^t\otimes\hat{\rho}_a^A\right)}{\Phi_+^{AA}}\nonumber\\
&=\max_{\{\hat{\rho}_a\}_a}\sum_a\tr\left(\hat{\rho}_aE_a\right)
=\sum_a\norm{E_a}_\infty,
\end{align}
where $\norm{E_a}_\infty$ is the operator norm of $E_a$.
This thus reproduces Eq.~(7) in Ref.~\cite{Skrzypczyk2019PRL}.
Also, together with Result~\ref{Result:F_ave R}, we obtain
\begin{align}
&\frac{1}{d_A(d_A+1)}\Big(\sum_a\norm{E_a}_\infty+d_A\Big)=F_{\rm ave}({\bm\mcE}^{({\bm E})})\nonumber\\
&=\max_{\{\hat{\rho}_a\}_a}\int d\psi \sum_a\bracket{\psi}{\hat{\rho}_a^B}{\psi}\bracket{\psi}{E_a}{\psi}\nonumber\\
&=\max_{\{\ket{\phi}_a\}_a}\int d\psi \sum_a\braket{\psi}{\phi_a}\braket{\phi_a}{\psi}\bracket{\psi}{E_a}{\psi},
\end{align}
suggesting another operational way to characterise $R_{\rm POVM}$. 

Finally, to find a complete set of criteria for resource conversions, Ref.~\cite{Skrzypczyk2019PRL} introduces an infinitely large  family of state discrimination tasks. 
This can indeed be reproduced by setting $d_B=1$ in Result~\ref{Result:conversion iff conditions}.

\section{Discussion} 

We have introduced a complete resource theory of instruments capable of coherent interactions.  We showed that a natural quantifier of the resource admits three different relevant interpretations in terms of non-contrived quantum information tasks, all focused on restoring some type of information after applying the instrument. We further also showed how one of these tasks paves the way for a complete characterisation of resource conversions under a natural set of allowed operations. 
Our findings are thus complementary to the recent work Ref.~\cite{Ghai2025}, which provides a general treatment of comparing and quantifying different instrument-based quantum resources. A natural open question is then whether one can further extend our operational interpretations and the complete set of monotones to other instrument-based resources.

The strong nature of the results derived here is largely thanks to the fact that the resourceless instruments are essentially trivial from a physical point of view, because they cannot at all interact with the state to which they are applied. One might expect similarly strong relations to be rarer when considering less basic notions of resource in instruments. Nevertheless, since adopting this type of basic resource is not uncommon in resource theories, and instruments represent a generalisation of other important quantum operations, we have found that our results return as special cases several well-known results in the literature.

We conclude with a few open questions. Firstly, can our resource theory be related to other notions of instrument resources, such as incompatible instruments~\cite{Hsieh2024PRL,Ji2024PRXQ,Ghai2025,Buscemi2023,Leppajarvi2024Quantum} or projective simulable instruments \cite{Khandelwal2025PRL}? 
A direct observation is that our resource theory is defined for a single instrument, while, for instance, incompatibility of instruments is by definition a property of at least {\em a pair of} instruments. Consequently, it is non-trivial to relate our notion of interactive instruments directly to incompatible instruments, and we leave this for future research.
Secondly, we focused on exact instrument simulations, but there may be some interest in studying approximate simulations as well.
Thirdly, this work focus on the operational interpretations as well as the existence of a complete set of monotone in the present resource theory, leaving a complete understanding of interactive instruments' mathematical structures an open question.
Finally, a natural follow-up, which is left for future work, is to consider the weight-based measure in our resource theory and explore its operational significance.

\section*{acknowledgements}
We thank \CYtwo{Shao-Hua Hu, Benjamin Stratton, and Hayata Yamasaki} for the fruitful discussion. 
C.-Y.~H.~acknowledges support from the Leverhulme Trust Early Career Fellowship (``Quantum complementarity: a novel resource for quantum science and technologies'' with Grant No.~ECF-2024-310). A.T.~is supported by the Knut and Alice Wallenberg Foundation through the Wallenberg Center for Quantum Technology (WACQT), the Swedish Research Council under Contract No.~2023-03498 and  the Swedish Foundation for Strategic Research. 
H.-Y.~K.~is supported by National Science
and Technology Council (NSTC) with Grant No.~NSTC 112-2112-M-003-020-MY3.

\appendix

\section{SDP for interactive instrument robustness}\label{App:SDP for robustness}
We first notice that for a trivial instrument ${\bm{\mathcal{L}}}^{\AtoB{A}{B}}$ with $\mathcal{L}_a^{\AtoB{A}{B}}(\cdot)=r(a)\hat{\tau}_a^B\tr(\cdot)$ for some probability $r(a)$ and state $\hat{\tau}_a^B$, by letting $\tau_a^B\coloneqq r(a)\hat{\tau}_a^B$, its {\em Choi representation}~\cite{Choi1975,Jamiolkowski1972} reads
\begin{equation}\label{Eq: trivial instrument Choi}
\left({\rm id}^A\otimes\mathcal{L}_a^{\AtoB{A}{B}}\right)(\Phi_+^{AA})=\frac{\mathds{1}^A}{d_A}\otimes\tau^B_a
\end{equation}
with $\tau_a^B\succeq0$ and $\sum_a\tr(\tau_a^B)=1$.
For a generic instrument ${\bm\mcN}^{\AtoB{A}{B}}$ used in \eqref{Eq:robustness def}, define the new variables \mbox{$\mu_a^{AB}\coloneqq t\left({\rm id}^A\otimes\mathcal{N}_a^{\AtoB{A}{B}}\right)(\Phi_+^{AA})$} and $\kappa_a^B\coloneqq(1+t)r(a)\hat{\tau}_a^B$.
Then, the non-trivial instrument robustness $R({\bm\mcE}^{\AtoB{A}{B}})$ [\eqref{Eq:robustness def}] can be rewritten into:
\begin{equation}
\begin{split}\label{Eq:robustness primal APP}
\min&\quad \sum_a\tr(\kappa_a^B)-1\\
  \text{s.t.} &\quad \left({\rm id}^A\otimes\mcE_a^{\AtoB{A}{B}}\right)(\Phi_+^{AA})+\mu^{AB}_a=\frac{\mathds{1}^A}{d_A}\otimes\kappa_a^B,\\
& \quad t\ge0,\mu_a^{AB}\succeq 0,\kappa_a^B\succeq 0\quad\forall\,a,\\
&\quad\sum_a\tr(\kappa_a^B)=1+t.
\end{split}
\end{equation}
Note that the constraint $\sum_a\tr\left(\mu^{AB}_a\right)=t$ (which is the condition ensuring $\mu^{AB}_a$'s are Choi operators of a valid instrument) is guaranteed by $\sum_a\tr(\kappa_a^B)=1+t$ and is thus redundant.
Hence, as $\left({\rm id}^A\otimes\mcE_a^{\AtoB{A}{B}}\right)(\Phi_+^{AA})+\mu^{AB}_a=(\mathds{1}^A/d_A)\otimes\kappa_a^B$ for some $\mu^{AB}_a\succeq0$ if and only if $\left({\rm id}^A\otimes\mcE_a^{\AtoB{A}{B}}\right)(\Phi_+^{AA})\preceq(\mathds{1}^A/d_A)\otimes\kappa_a^B$, \eqref{Eq:robustness primal APP} is equal to
\begin{equation}
\begin{split}
\min&\quad \sum_a\tr(\kappa_a^B)-1\\
  \text{s.t.} &\quad \left({\rm id}^A\otimes\mcE_a^{\AtoB{A}{B}}\right)(\Phi_+^{AA})\preceq\frac{\mathds{1}^A}{d_A}\otimes\kappa_a^B,\\
& \quad t\ge0,\kappa_a^B\succeq 0\quad\forall\,a,\\
&\quad\sum_a\tr(\kappa_a^B)=1+t.
\end{split}
\end{equation}
Then, the constraint $\sum_a\tr(\kappa_a^B)=1+t$ becomes redundant, and hence we can simplify it into
\begin{equation}
\begin{split}\label{Eq:primal APP}
R({\bm\mcE}^{\AtoB{A}{B}})=  
\min&\quad \sum_a\tr(\kappa_a^B)-1\\
  \text{s.t.} &\quad 
  {\mathcal{J}(\mcE_a)^{AB}}\preceq\frac{\mathds{1}^A}{d_A}\otimes\kappa_a^B,\\
& \quad \kappa_a^B\succeq 0\quad\forall\,a,
\end{split}
\end{equation}
where we write ${\bm\mcE}^{\AtoB{A}{B}}$'s Choi operators~\cite{Choi1975,Jamiolkowski1972} as 
\begin{align}\label{Eq:Choi operators}
\mathcal{J}(\mcE_a)^{AB}\coloneqq\left({\rm id}^A\otimes\mcE_a^{\AtoB{A}{B}}\right)(\Phi_+^{AA}).
\end{align}
This is then an SDP formulation, and we call it the {\em primal problem} for the interactive instrument robustness. 
The theory of duality can then be applied, leading to \eqref{Eq: main dual}.
%
%
%
To see this, we can now compute its dual problem.
The Lagrangian is
\begin{align}
\mathcal{L}(\{\omega_a^{AB}\}_a,\{Y_a^B\}_a)&=\sum_a\tr\left(\kappa_a^B\right)-1\nonumber\\
&\quad-\sum_a \tr\left[\omega_a^{AB}\left(\frac{\mathds{1}^A}{d_A}\otimes\kappa_a^B-\mathcal{J}(\mcE_a)^{AB}\right)\right]\nonumber\\
&\quad-\sum_a \tr\left(Y_a^B\kappa_a^B\right)\nonumber\\
&=\sum_a \tr\left[\kappa_a^B \left(\mathds{1}^B-\frac{1}{d_A}\tr_A(\omega_a^{AB})-Y_a^B\right)\right]\nonumber\\
&\quad-1+\sum_a \tr\left(\omega_a^{AB}\mathcal{J}(\mcE_a)^{AB}\right).
\end{align}
We can then write the primal problem \eqref{Eq:primal APP} into
\begin{align}
R({\bm\mcE}^{\AtoB{A}{B}})= \min_{\kappa_a=\kappa_a^\dagger}\max_{\substack{\omega^{AB}_a\succeq0,\\Y^B_a\succeq0}}\mathcal{L}(\{\omega_a^{AB}\}_a,\{Y_a^B\}_a).
\end{align}
Switching the order of minimisation and maximisation, we obtain the {\em dual problem} as
\begin{align}
R({\bm\mcE}^{\AtoB{A}{B}})= \max_{\substack{\omega^{AB}_a\succeq0,\\Y^B_a\succeq0}}\min_{\kappa_a=\kappa_a^\dagger}\mathcal{L}(\{\omega_a^{AB}\}_a,\{Y_a^B\}_a).
\end{align}
\CY{This is} equal to $R({\bm\mcE}^{\AtoB{A}{B}})$ since the strong duality holds, as there exists a strictly feasible solution to the primal problem, e.g., take $\kappa_a^B=\mathds{1}^B\succ0$ $\forall\,a$, which gives $\mathcal{J}(\mcE_a)^{AB}\prec(\mathds{1}^A/d_A)\otimes\kappa_a^B$ $\forall\,a$ \CY{(known as {\em Slater's condition}~\cite{Slater2014}; see also, e.g., Refs.~\cite{Watrous_2018,Tavakoli2024RMP})}.
To solve the dual problem, note that $\min_{\kappa_a=\kappa_a^\dagger}\mathcal{L}(\{\omega_a^{AB}\}_a,\{Y_a^B\}_a)$ (which minimises over Hermitian $\kappa_a^B$'s) equals $-\infty$ unless $0=\mathds{1}^B-(1/d_A)\tr_A(\omega_a^{AB})-Y_a^B$, which implies $\tr_A(\omega_a^{AB})\preceq d_A\mathds{1}^B$ since $Y_a^B\succeq 0$. Thus, the dual problem becomes
\begin{equation}
\begin{split}\label{Eq: original dual}
R({\bm\mcE}^{\AtoB{A}{B}})=  \quad \max&\quad \sum_a\tr\left( \omega_a^{AB}\mathcal{J}(\mcE_a)^{AB}\right)-1\\
  \text{s.t.}& \quad \tr_A(\omega_a^{AB})\preceq d_A\mathds{1}^B,\\
  &\quad\omega^{AB}_a \succeq 0\quad\forall\,a.
\end{split}
\end{equation}
Next, we can replace the inequality constraint with an {\em equality}, namely,
\begin{equation}
\begin{split}\label{dual}
R({\bm\mcE})=  \quad \max&\quad \sum_a\tr\left( \omega_a^{AB}\mathcal{J}(\mcE_a)^{AB}\right)-1\\
  \text{s.t.}& \quad \tr_A(\omega_a^{AB})=d_A\mathds{1}^B,\\
  &\quad \omega^{AB}_a \succeq 0\quad\forall\,a.
\end{split}
\end{equation}
To justify this, note first that \eqref{dual}'s solution is also a solution for \eqref{Eq: original dual}. 
For the converse, consider an optimal solution $\{\widetilde{\omega}_{a}^{AB}\}_a$ to \eqref{Eq: original dual}. Then there exists $R_a^{B}\succeq0$ such that $\tr_A(\widetilde{\omega}_a^{AB})+R_a^{B}=d_A\mathds{1}^B$. 
The new solution $\{\widetilde{\omega}_a^{AB}+(\mathds{1}^A/d_A)\otimes R_a^B\}_a$ satisfies the constraints in both formulations, but it leads to \eqref{Eq: original dual}'s objective picking up the contribution $\sum_a \tr\left[\left((\mathds{1}^A/d_A)\otimes R_a^B\right)\mathcal{J}(\mcE_a)^{AB}\right]\geq 0$. Hence, the objective cannot decrease, which contradicts the optimality of \eqref{Eq: original dual}'s solution unless $\sum_a \tr\left[\left((\mathds{1}^A/d_A)\otimes R_a^B\right)\mathcal{J}(\mcE_a)^{AB}\right]=0$ is precisely zero. This then implies that the new solution $\{\widetilde{\omega}_a^{AB}+(\mathds{1}^A/d_A)\otimes R_a^B\}_a$ must also give the same, {\em optimal} objective value for \eqref{Eq: original dual} as the one achieved by $\{\widetilde{\omega}_a^{AB}\}_a$---the former formulation's optimality can indeed be achieve by a solution of the latter formulation. Hence, the two formulations are equivalent.

Finally, we obtain the desired dual problem as given in \eqref{Eq: main dual}
by rescaling $\omega_a^{AB}\rightarrow \omega_a^{AB}/d_A$.
As mentioned in the main text, this dual problem is useful because it lets us interpret $\omega_a^{AB}$ as the (un-normalised) Choi operator of a completely-positive {\em unital} linear map $\Gamma_a^{\AtoB{A}{B}}$ from $A$ to $B$ (i.e., it maps $\mathds{1}^A$ to $\mathds{1}^B$). 
More precisely, we can write (note the $d_A$ dependence)
\begin{align}
\omega^{AB}_a=({\rm id}^A\otimes\Gamma_a^{\AtoB{A}{B}})(d_A\Phi_+^{AA})\quad\forall\,a,
\end{align}
which ensures that $\mathds{1}^A$ is mapped to $\mathds{1}^B$.
Since $\hat{\Lambda}^{\AtoB{B}{A}}_a\coloneqq(\Gamma_a^{\AtoB{A}{B}})^\dagger$ is a channel (i.e., trace-preserving) from $B$ to $A$, this means that we can write $1+R({\bm\mcE}^{\AtoB{A}{B}})$ as the following maximisation:
\begin{equation}
\begin{split}\label{dual alrternative}
\max&\quad d_A^2\sum_a\tr\left[ ({\rm id}^A\otimes(\hat{\Lambda}^{\AtoB{B}{A}}_a)^\dagger)(\Phi_+^{AA})\mathcal{J}(\mcE_a)^{AB}\right]\\
  \text{s.t.} &\quad \text{each $\hat{\Lambda}^{\AtoB{B}{A}}_a$ is a channel from $B$ to $A$}.
\end{split}
\end{equation}
Taking the adjoint map can thus prove Result~\ref{Result:F_+}.\\

\section{Proof of Result~\ref{Result:F_ave R}
}\label{App:Proof-Result:F_ave R}

\noindent{\em Proof.}
Recall from the main text 
\CYSuppMat{Eq.~(6)}
that the average fidelity of an instrument ${\bm\mcE}^{\AtoB{A}{B}}=\{\mcE_a^{\AtoB{A}{B}}\}_a$ from $A$ to $B$ reads
\begin{equation}\label{Eq:F_ave def SM}
F_{\rm ave}({\bm\mcE}^{\AtoB{A}{B}})=\max_{\{\hat{\Lambda}_a^{\AtoB{B}{A}}\}_a}\int d\psi \bracket{\psi^A}{\sum_a (\hat{\Lambda}_a\circ \mcE_a)^A(\psi^A)}{\psi^A}.
\end{equation}
To compute this quantity, we use the inverse Choi map~\cite{Choi1975,Jamiolkowski1972}. 
More precisely, for the trace non-increasing linear map $\mcE_a^{\AtoB{A}{B}}$ from $A$ to $B$ and the unital linear map $\Gamma_a^{\AtoB{A}{B}}\coloneqq(\hat{\Lambda}_a^{\AtoB{B}{A}})^\dagger$ from $A$ to $B$ (since $\hat{\Lambda}_a^{\AtoB{B}{A}}$ is from $B$ to $A$), we can write, for every $\hat{\rho}^A$ in $A$,
\begin{align}
&\mcE_a^{\AtoB{A}{B}}(\hat{\rho}^A)=d_A\tr_A\left[\left((\hat{\rho}^A)^t\otimes\mathds{1}^B\right)\mathcal{J}(\mcE_a)^{AB}\right];\\
&\Gamma_a^{\AtoB{A}{B}}(\hat{\rho}^A)=d_A\tr_A\left[\left((\hat{\rho}^A)^t\otimes\mathds{1}^B\right)\mathcal{J}(\Gamma_a)^{AB}\right]
\end{align}
with the Choi operators~\cite{Choi1975,Jamiolkowski1972} as defined in \eqref{Eq:Choi operators}, namely,
\mbox{$
\mathcal{J}(\mcE_a)^{AB}\coloneqq({\rm id}^A\otimes\mcE_a^{\AtoB{A}{B}})(\Phi_+^{AA})
$}
and
\mbox{$
\mathcal{J}(\Gamma_a)^{AB}\coloneqq({\rm id}^A\otimes\Gamma_a^{\AtoB{A}{B}})(\Phi_+^{AA}),
$}
and $(\cdot)^t$ denotes the transpose map.
Then, we have (below, $A'$ denotes an identical copy of the system $A$; \CYtwo{also, recall that $\psi\coloneqq\proj{\psi}$})
\begin{widetext}
\begin{align}
\bracket{\psi^A}{(\hat{\Lambda}_a\circ\mcE_a)^A(\psi^A)}{\psi^A}&=\tr\left[\mcE_a^{\AtoB{A}{B}}(\psi^A)(\hat{\Lambda}_a^{\AtoB{B}{A}})^\dagger(\psi^A)\right]\eqqcolon\tr\left[\mcE_a^{\AtoB{A}{B}}(\psi^A)\Gamma_a^{\AtoB{A}{B}}(\psi^A)\right]\nonumber\\\nonumber
&=d_A^2\tr\Big(\tr_A\left[\left((\psi^A)^t\otimes\mathds{1}^B\right)\mathcal{J}(\mcE_a)^{AB}\right] \tr_{A'}\left[\left((\psi^{A'})^t\otimes\mathds{1}^B\right)\mathcal{J}(\Gamma_a)^{A'B}\right]\Big)\\
&=d_A^2\tr\Big(\left[(\psi^{A})^t\otimes(\psi^{A'})^t\otimes\mathds{1}^B\right](\mathds{1}^{A'}\otimes\mathcal{J}(\mcE_a)^{AB})(\mathds{1}^{A}\otimes\mathcal{J}(\Gamma_a)^{A'B})\Big).
\end{align}
This further implies
\begin{align}
F_{\rm ave}({\bm\mcE}^{\AtoB{A}{B}})=\max_{\{\Gamma_a^{\AtoB{A}{B}}\}_a}d_A^2\int d\psi  \sum_a 
\tr\Big(\left[(\psi^{A})^t\otimes(\psi^{A'})^t\otimes\mathds{1}^B\right](\mathds{1}^{A'}\otimes\mathcal{J}(\mcE_a)^{AB})(\mathds{1}^{A}\otimes\mathcal{J}(\Gamma_a)^{A'B})\Big),
\end{align}
which now maximises over unital maps $\Gamma_a^{\AtoB{A}{B}}$ from $A$ to $B$.
Since trace is invariant under partial transpose, we can apply the partial transpose to systems $AA'$ and obtain
\begin{equation}
F_{\rm ave}({\bm\mcE}^{\AtoB{A}{B}})=\max_{\{\Gamma_a^{\AtoB{A}{B}}\}_a}d_A^2\sum_a \tr\left(\left[\int d\psi(\psi^A \otimes \psi^{A'})\otimes\mathds{1}^B\right]
\left[\mathds{1}^{A'}\otimes(\mathcal{J}(\mcE_a)^{AB})^{t_A}\right]\left[\mathds{1}^{A}\otimes(\mathcal{J}(\Gamma_a)^{A'B})^{t_{A'}}\right]\right),
\end{equation}
where $(\cdot)^{t_A}$ and $(\cdot)^{t_{A'}}$ are the partial transpose in $A$ and $A'$, respectively.
The integral is a Haar-average over all two-copy states, which equals the normalised projector onto the symmetric subspace (e.g., set $n=2$ in Proposition 6 in Ref.~\cite{Harrow2013}):
\begin{equation}
\int d\psi\psi^A \otimes \psi^{A'}=\frac{2}{d_A(d_A+1)}\Pi_{\text{sym}}^{AA'},
\end{equation}
\CYtwo{where $\Pi_{\text{sym}}^{AA'}$ is the projector onto the symmetric subspace that satisfies} $\Pi_{\text{sym}}^{AA'}\left(\ket{\psi}^A\otimes\ket{\psi}^{A'}\right)=\ket{\psi}^A\otimes\ket{\psi}^{A'}$ for any $\ket{\psi}$, and $d_A(d_A+1)/2$ is the dimension of the symmetric subspace (see, e.g., Proposition 6 and Table 1 in Ref.~\cite{Harrow2013}). 
Now, recall that $\Pi_{\rm sym}^{AA'}$ can be expressed as a combination of $\mathds{1}^{AA'}$ and the swap operator ${\rm SWAP}^{AA'}:\rho^A\otimes\sigma^{A'}\mapsto\sigma^{A}\otimes\rho^{A'}$ $\forall\,\rho,\sigma$ (see, e.g., Proposition 7.1 with $n=2$ in Ref.~\cite{Watrous_2018}):
\begin{equation}
\Pi_{\text{sym}}^{AA'}=\frac{1}{2}\left( \mathds{1}^{AA'}+ {\rm SWAP}^{AA'}\right)=\frac{1}{2}\left( \mathds{1}^{AA'}+ d_A(\Phi^{AA'}_+)^{t_A}\right),
\end{equation}
where note that ${\rm SWAP}^{AA'}=d_A (\Phi^{AA'}_+)^{t_A}$, which follows directly from definition.
Returning to the fidelity expression, we now have
\begin{equation}
F_{\rm ave}({\bm\mcE}^{\AtoB{A}{B}})=\max_{\{\Gamma_a^{\AtoB{A}{B}}\}_a}\frac{d_A}{d_A+1}  \sum_a \tr\Bigg(\left(\left[\mathds{1}^{AA'}+ d_A(\Phi^{AA'}_+)^{t_A}\right]\otimes\mathds{1}^B\right) \left[\mathds{1}^{A'}\otimes(\mathcal{J}(\mcE_a)^{AB})^{t_A}\right]\left[\mathds{1}^{A}\otimes(\mathcal{J}(\Gamma_a)^{A'B})^{t_{A'}}\right]\Bigg).
\end{equation}
Again, use the fact that trace is invariant under partial transpose on $A$, we obtain
\begin{align}\nonumber
F_{\rm ave}({\bm\mcE}^{\AtoB{A}{B}})&=\max_{\{\Gamma_a^{\AtoB{A}{B}}\}_a}\frac{d_A}{d_A+1}  \sum_a \tr\Bigg(\left[\left(\mathds{1}^{AA'}+ d_A\Phi^{AA'}_+\right)\otimes\mathds{1}^B\right] \left[\mathds{1}^{A'}\otimes\mathcal{J}(\mcE_a)^{AB}\right]\left[\mathds{1}^{A}\otimes(\mathcal{J}(\Gamma_a)^{A'B})^{t_{A'}}\right]\Bigg)\\ \nonumber
& =\max_{\{\Gamma_a^{\AtoB{A}{B}}\}_a}\bigg\{ \frac{d_A}{d_A+1}\sum_a \tr\Big(\left[\mathds{1}^{A'}\otimes\mathcal{J}(\mcE_a)^{AB}\right]\left[\mathds{1}^{A}\otimes(\mathcal{J}(\Gamma_a)^{A'B})^{t_{A'}}\right]\Big)\\
&\qquad\qquad + \frac{d_A^2}{d_A+1}  \sum_a \tr\Big(\left(\Phi^{AA'}_+\otimes\mathds{1}^B\right)\left[\mathds{1}^{A'}\otimes\mathcal{J}(\mcE_a)^{AB}\right]\left[\mathds{1}^{A}\otimes(\mathcal{J}(\Gamma_a)^{A'B})^{t_{A'}}\right]\Big)\bigg\}.
\end{align}
For the term in the second line, we apply a partial transpose on $A'$, which preserves the trace. 
For the term in the last line, we apply the formula $(R^A\otimes\mathds{1}^{A'})\ket{\Phi_+^{AA'}}=(\mathds{1}^{A}\otimes (R^{A'})^t)\ket{\Phi_+^{AA'}}$ that holds for every normal operator $R$.
Then, we obtain
\begin{align}\nonumber\label{stepcalc}
F_{\rm ave}({\bm\mcE}^{\AtoB{A}{B}})&=\max_{\{\Gamma_a^{\AtoB{A}{B}}\}_a}
\bigg\{  \frac{d_A}{d_A+1}\sum_a \tr\Big(\left[\mathds{1}^{A'}\otimes\mathcal{J}(\mcE_a)^{AB}\right]\left[\mathds{1}^{A}\otimes\mathcal{J}(\Gamma_a)^{A'B}\right]\Big)\\
&\qquad\qquad + \frac{d_A^2}{d_A+1}  \sum_a \tr\Big(\left(\Phi^{AA'}_+\otimes\mathds{1}^B\right)\left[\mathds{1}^{A}\otimes(\mathcal{J}(\mcE_a)^{A'B})^{t_{A'}}\right]\left[\mathds{1}^{A}\otimes(\mathcal{J}(\Gamma_a)^{A'B})^{t_{A'}}\right]\Big)\bigg\}.
\end{align}
Recall that $\Gamma^{\AtoB{A}{B}}_a$ is a unital linear map from $A$ to $B$, meaning that it maps $\mathds{1}^A$ to $\mathds{1}^B$.
This means that 
\begin{align}
\tr_{A}\left(\mathcal{J}(\Gamma_a)^{AB}\right)=\tr_A\left[({\rm id}^A\otimes\Gamma_a^{\AtoB{A}{B}})(\Phi_+^{AA})\right]=\frac{\Gamma_a^{\AtoB{A}{B}}\left(\mathds{1}^A\right)}{d_A}=\frac{1}{d_A}\mathds{1}_B.
\end{align}
Hence, by tracing out $A'$ in the first term inside the maximisation in \eqref{stepcalc}, it becomes (note that $d_{A'}=d_A$)
\begin{align}
&\frac{d_A}{d_A+1}\sum_a \tr\Big(\left[\mathds{1}^{A'}\otimes\mathcal{J}(\mcE_a)^{AB}\right]\left[\mathds{1}^{A}\otimes\mathcal{J}(\Gamma_a)^{A'B}\right]\Big)=\frac{1}{d_A+1}\sum_a \tr\Big(\mathcal{J}(\mcE_a)^{AB}\Big)=\frac{1}{d_A+1},
\end{align}
where note that that $\mcE_a^{\AtoB{A}{B}}$'s form an instrument and hence $\sum_a\mcE_a$ is a channel, thereby $\sum_a \tr\Big(\mathcal{J}(\mcE_a)^{AB}\Big)=\tr\Big(\mathcal{J}(\sum_a\mcE_a)^{AB}\Big)=1$.
Next, for the second term inside the maximisation in \eqref{stepcalc}, tracing out $A$ and then applying a partial transpose to $A'$ gives
\begin{equation}
\frac{d_A^2}{d_A+1}  \sum_a \tr\Big(\left(\Phi^{AA'}_+\otimes\mathds{1}^B\right)\left[\mathds{1}^{A}\otimes(\mathcal{J}(\mcE_a)^{A'B})^{t_{A'}}\right]\left[\mathds{1}^{A}\otimes(\mathcal{J}(\Gamma_a)^{A'B})^{t_{A'}}\right]\Big)=\frac{d_A}{d_A+1}  \sum_a \tr\left(\mathcal{J}(\mcE_a)^{A'B}\mathcal{J}(\Gamma_a)^{A'B}\right).
\end{equation} 
Putting everything together and use \eqref{dual alrternative} (as $\Gamma_a^{\AtoB{A}{B}}$'s are unital linear maps), we obtain
\begin{equation}\label{Eq:result Fave}
F_{\rm ave}({\bm\mcE}^{\AtoB{A}{B}})=\frac{1}{d_A+1}\left(1+d_A \max_{\{\Gamma_a^{\AtoB{A}{B}}\}_a}\sum_a \tr\left[\mathcal{J}(\mcE_a)^{A'B}\mathcal{J}(\Gamma_a)^{A'B}\right]\right)=\frac{1}{d_A+1}\left(1+\frac{R\left({\bm\mcE}^{\AtoB{A}{B}}\right)+1}{d_A}\right)=\frac{1}{d_A}\left(1+\frac{R\left({\bm\mcE}^{\AtoB{A}{B}}\right)}{d_A+1}\right).
\end{equation}
This then concludes the proof.
\hfill$\square$
\newpage
\end{widetext}

\section{Proof of 
Result~\ref{Result: three operational meanings}
}\label{App:Proof-Result: three operational meanings}

First, we need to establish the following lemma \CY{(recall that ${\bf NI}$ is the set of all non-interactive isntruments)}:\\

\begin{lemma}\label{small lemma}
$\tr_A(\omega_a^{AB})\preceq d_A\mathds{1}^B$ $\forall\,a$ if and only if 
\begin{align}\label{Eq:small lemma 001}
\max_{{\bm{\mathcal{L}}}^{\AtoB{A}{B}}\in\CY{\bf NI}}\sum_a\tr[\omega_a^{AB}\left({\rm id}^A\otimes{\mathcal{L}}_a^{\AtoB{A}{B}}\right)(\Phi_+^{AA})]\le1.
\end{align}
\end{lemma}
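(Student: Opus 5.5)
Using the Choi form \eqref{Eq: trivial instrument Choi}, the left-hand side of \eqref{Eq:small lemma 001} reduces to a simple quantity. Every ${\bm{\mathcal{L}}}\in{\bf NI}$ has Choi operators $(\mathds{1}^A/d_A)\otimes\tau_a^B$, with $\tau_a^B\succeq0$ and $\sum_a\tr(\tau_a^B)=1$. Conversely, every such family $\{\tau_a^B\}_a$ defines a non-interactive instrument, via $r(a)=\tr(\tau_a^B)$ and $\hat{\tau}_a=\tau_a^B/r(a)$. Therefore
\begin{align}
\sum_a\tr\left[\omega_a^{AB}\left({\rm id}^A\otimes\mathcal{L}_a\right)(\Phi_+^{AA})\right]=\frac{1}{d_A}\sum_a\tr\left[\tr_A(\omega_a^{AB})\,\tau_a^B\right],
\end{align}
and the maximum over ${\bf NI}$ is a maximum of a linear function over this set of $\{\tau_a^B\}_a$. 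Since $\omega_a^{AB}\succeq0$ (the setting of the dual in \eqref{Eq: original dual}), each $X_a^B\coloneqq\tr_A(\omega_a^{AB})$ is positive semidefinite.

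Next I would evaluate the maximum in closed form. Write $\tau_a^B=r(a)\hat{\tau}_a^B$. For fixed $r$, the best choice of $\hat{\tau}_a$ is a top eigenvector of $X_a^B$, giving $\norm{X_a^B}_\infty$. Optimising over the distribution $r$ then concentrates all weight on the largest term. Hence
\begin{align}
\max_{{\bm{\mathcal{L}}}\in{\bf NI}}\sum_a\tr\left[\omega_a^{AB}\left({\rm id}^A\otimes\mathcal{L}_a\right)(\Phi_+^{AA})\right]=\frac{1}{d_A}\max_a\norm{\tr_A(\omega_a^{AB})}_\infty.
\end{align}
This matches the formula the paper states for $P_{\rm succ}({\bm{\mathcal{L}}},{\bm Q})$.

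Both directions of the equivalence then follow. For positive semidefinite $X_a^B$, the condition $\norm{X_a^B}_\infty\le d_A$ holds if and only if $X_a^B\preceq d_A\mathds{1}^B$. So $\max_a\norm{X_a}_\infty/d_A\le1$ holds exactly when $\tr_A(\omega_a^{AB})\preceq d_A\mathds{1}^B$ for all $a$.

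The only delicate points are checking that the parametrisation of ${\bf NI}$ by $\{\tau_a\}_a$ is exhaustive, and using $\omega_a\succeq0$. The second is needed so that the operator norm equals the largest eigenvalue. If $\omega_a$ were not assumed positive, I would replace the operator norm by $\lambda_{\max}$ throughout, and the argument would go through unchanged.
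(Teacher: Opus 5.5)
Your proof is correct and is essentially the paper's argument. Both reduce the left-hand side, via the Choi form $(\mathds{1}^A/d_A)\otimes\tau_a^B$ of non-interactive instruments, to $\frac{1}{d_A}\sum_a\tr\left[\tr_A(\omega_a^{AB})\tau_a^B\right]$; the only difference is that the paper proves the two implications directly instead of evaluating the maximum in closed form, summing $\tr\left[\tr_A(\omega_a^{AB})\tau_a^B\right]\le d_A\tr(\tau_a^B)$ for one direction and choosing $\tau_b^B=\delta_{a,b}\hat{\rho}^B$ for the other, so it never needs $\omega_a^{AB}\succeq0$, in line with your remark that replacing $\norm{\cdot}_\infty$ by $\lambda_{\max}$ extends your argument to Hermitian $\omega_a^{AB}$.
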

{\em Proof of Lemma~\ref{small lemma}.}
We first prove that $\tr_A(\omega_a^{AB})\preceq d_A\mathds{1}^B$ $\forall\,a$ if and only if 
\begin{align}\label{Eq:small lemma}
\sum_a\tr[\omega_a^{AB}\left(\frac{\mathds{1}^A}{d_A}\otimes\tau_a^B\right)]\le1\quad\text{{$\forall$} $\tau_a^B\succeq0$ with $\sum_a\tr(\tau_a^B)=1$.}
\end{align} 
If $\tr_A(\omega_a^{AB})\preceq d_A\mathds{1}^B$ $\forall\,a$, then we have, for every $\tau_a^B\succeq0$ with $\sum_a\tr(\tau_a^B)=1$,
\begin{align}\label{Eq:small lemma 002}
\sum_a\tr[\omega_a^{AB}\left(\frac{\mathds{1}^A}{d_A}\otimes\tau_a^B\right)]&=\sum_a\frac{\tr\left[\tr_A(\omega^{AB}_a)\tau_a^B\right]}{d_A}\nonumber\\
&\le\sum_a\tr\left(\mathds{1}^B\tau_a^B\right)=1.
\end{align}
On the other hand, if \eqref{Eq:small lemma} holds, then we have that, for every {\em fixed} $a$, by setting $\tau_b^B=\delta_{a,b}\hat{\rho}^B$ with some state $\hat{\rho}^B$,
\begin{align}
\tr\left[\tr_A(\omega^{AB}_a)\hat{\rho}^B\right]\le d_A\quad\forall\,\text{state $\hat{\rho}^B$}.
\end{align}
This means that
$
\tr[\left(d_A\mathds{1}^B-\tr_A(\omega^{AB}_a)\right)\hat{\rho}^B]\ge0
$
for every state $\hat{\rho}^B$,
which is equivalent to $d_A\mathds{1}^B-\tr_A(\omega^{AB}_a)\succeq0$ (note that the argument holds for every $a$).

Now, \CYtwo{note that an instrument ${\bm{\mathcal{L}}}^{\AtoB{A}{B}}$ from $A$ to $B$ is non-interactive if and only if} its Choi representations read 
\CYSuppMat{[from Eq.~(15) in the main text]}
$\left({\rm id}^A\otimes\mathcal{L}_a^{\AtoB{A}{B}}\right)(\Phi_+^{AA})=\CYtwo{(\mathds{1}^A/d_A)\otimes\tau^B_a}$ for some \CYtwo{$\tau_a^B\succeq0$ with $\sum_a{\rm tr}(\tau_a^B)=1$.}
\CYtwo{Hence,} \eqref{Eq:small lemma} holds if and only if 
\begin{align}
\sum_a\tr[\omega_a^{AB}\left({\rm id}^A\otimes{\mathcal{L}}_a^{\AtoB{A}{B}}\right)(\Phi_+^{AA})]\le1\quad\forall\,{\bm{\mathcal{L}}}^{\AtoB{A}{B}}\in\CY{\bf NI},
\end{align}
which is true if and only if \eqref{Eq:small lemma 001} holds.
This concludes the proof of the lemma.
\hfill$\square$\\

\CYtwo{Before the proof of 
Result~\ref{Result: three operational meanings}, let us define a notation.}
For a given set $\{\omega_a^{AB}\}_a$, let us write 
\begin{align}
\beta(\{\omega_a^{AB}\}_a)\coloneqq\max_{{\bm{\mathcal{L}}}^{\AtoB{A}{B}}\in\CY{\bf NI}}\sum_a\tr[\omega_a^{AB}\left({\rm id}^A\otimes{\mathcal{L}}_a^{\AtoB{A}{B}}\right)(\Phi_+^{AA})].
\end{align}
Then, from \eqref{Eq:small lemma} and \eqref{Eq:small lemma 002}, one can observe that
\begin{align}
\beta(\{\omega_a^{AB}\}_a)&=\max_{\substack{\tau_a^B\succeq0,\\\sum_a\tr(\tau_a^B)=1}}\sum_a\frac{\tr\left[\tr_A(\omega^{AB}_a)\tau_a^B\right]}{d_A}\nonumber\\
&\ge\frac{1}{d_A}\norm{\tr_A(\omega^{AB}_b)}_\infty\quad\forall\,b,
\end{align}
where the last inequality follows by considering a sub-optimal maximisation over $\tau_a^B=\delta_{a,b}\hat{\rho}^B$ with state $\hat{\rho}^B$, which will then return $\norm{\tr_A(\omega^{AB}_b)}_\infty/d_A$.
Since $\norm{\tr_A(\omega^{AB}_b)}_\infty=0$ if and only if $\tr_A(\omega^{AB}_b)=0$, which can be true if and only if $\omega^{AB}_b=0$, we thus conclude that
\begin{align}\label{Eq:small corollary}
\beta(\{\omega_a^{AB}\}_a)=0\quad\text{if and only if}\quad\omega_a^{AB}=0\quad\forall\,a.
\end{align}

Now, we can proceed to prove Result~\ref{Result: three operational meanings}.
\\

\noindent{\em Proof of Result~\ref{Result: three operational meanings}.}
Recall from \eqref{Eq:Choi operators} that, for an instrument ${\bm\mcE}^{\AtoB{A}{B}}=\{\mcE_a^{\AtoB{A}{B}}\}_a$ from $A$ to $B$, we write its Choi operator~\cite{Choi1975,Jamiolkowski1972} as $\mathcal{J}(\mcE_a)^{AB}\coloneqq\left({\rm id}^A\otimes\mcE_a^{\AtoB{A}{B}}\right)(\Phi_+^{AA})$.
Then, from \eqref{Eq: original dual}, we have
\begin{equation}
\begin{split}
1+R({\bm\mcE}^{\AtoB{A}{B}})=  \quad \max&\quad \sum_a\tr\left( \omega_a^{AB}\mathcal{J}(\mcE_a)^{AB}\right)\\
 \text{s.t.} & \quad \tr_A(\omega_a^{AB})\preceq d_A\mathds{1}^B,\\
&\quad \omega^{AB}_a \succeq 0\quad\forall\,a.
\end{split}
\end{equation}
Using Lemma~\ref{small lemma}, the above optimisation can be rewritten into
\begin{equation}
\begin{split}\label{Eq: dual form 0001-2}
1+R({\bm\mcE}^{\AtoB{A}{B}})=  \quad \max&\quad \sum_a\tr\left( \omega_a^{AB}\mathcal{J}(\mcE_a)^{AB}\right)\\
  \text{s.t.}& \quad \beta(\{\omega_a^{AB}\}_a)\le1,\\
&\quad \omega^{AB}_a \succeq 0\quad\forall\,a.
\end{split}
\end{equation}
Now, from \eqref{Eq:small corollary}, we note that $\beta(\{\omega_a^{AB}\}_a)=0$ if and only if $\omega_a^{AB}=0$ $\forall\,a$.
On the other hand, $1+R({\bm\mcE}^{\AtoB{A}{B}})\ge1$, meaning that the optimality must be achieved by some non-vanishing $\omega_a^{AB}$'s.
Hence, without loss of generality, we can assume $\beta(\{\omega_a^{AB}\}_a)>0$, meaning that $1\le\beta(\{\omega_a^{AB}\}_a)^{-1}$.
Namely, we have
\begin{align}\label{Eq: upper bound 001-2}
1+R({\bm\mcE}^{\AtoB{A}{B}})\le \max_{\omega_a^{AB} \succeq 0}\frac{\sum_a\tr\left( \omega_a^{AB}\mathcal{J}(\mcE_a)^{AB}\right)}{\beta(\{\omega_a^{AB}\}_a)}.
\end{align}
To see how this upper bound is saturated, suppose $\omega_a^{AB}$'s are feasible solutions to the above maximisation.
Then, define
\begin{align}
\widetilde{\omega}^{AB}_a\coloneqq\frac{\omega_a^{AB}}{\beta(\{\omega_b^{AB}\}_b)}\quad\forall\,a.
\end{align}
Then one can directly check that $\beta(\{\widetilde\omega_a^{AB}\}_a)=1$, and hence $\widetilde\omega^{AB}_a$'s form a feasible solution to \eqref{Eq: dual form 0001-2}.
This thus means that
\begin{align}
1+R({\bm\mcE}^{\AtoB{A}{B}})\ge\sum_a\tr\left(\widetilde\omega_a^{AB}\mathcal{J}(\mcE_a)^{AB}\right) =\frac{\sum_a\tr\left( \omega_a^{AB}\mathcal{J}(\mcE_a)^{AB}\right)}{\beta(\{\omega_a^{AB}\}_a)}.
\end{align}
Since this argument holds for {\em all} feasible solutions $\omega_a^{AB}$'s to the maximisation in \eqref{Eq: upper bound 001-2}, we conclude that
\begin{align}
1+R({\bm\mcE}^{\AtoB{A}{B}})=\max_{\omega_a^{AB}\succeq 0}\frac{\sum_a\tr\left( \omega_a^{AB}\mathcal{J}(\mcE_a)^{AB}\right)}{\beta(\{\omega_a^{AB}\}_a)}.
\end{align}
Dividing both the numerator and denominator by $\norm{\sum_a\omega_a^{AB}}_\infty$ and \CYtwo{defining ${\bm Q}^{AB}=\{Q_a^{AB}\}_a$ with $a=1,...,|\mathcal{A}|,\emptyset$ as }
\begin{align}
&Q_a^{AB}\coloneqq\frac{\omega_a^{AB}}{\norm{\sum_b\omega_b^{AB}}_\infty}\quad\forall\,a=1,...,|\mathcal{A}|;\\
&\CYtwo{Q_\emptyset^{AB}\coloneqq\mathds1^{AB}-\sum_{a=1}^{|\mathcal{A}|}Q_a^{AB}},
\end{align}
which satisfies $Q_a^{AB}\succeq0$ \CYtwo{$\forall\,a=1,...,|\mathcal{A}|$} and $\sum_{a=1}^{|\mathcal{A}|}Q_a^{AB}\preceq\mathds1^{AB}$, \CYtwo{meaning that $0\preceq Q_\emptyset^{AB}\preceq\mathds1^{AB}$ and hence ${\bm Q}^{AB}$ is a valid POVM}. 
\CYtwo{Using 
\CYSuppMat{Eq.~(9)}
in the main text to write $P_{\rm succ}({\bm\mcE}^{\AtoB{A}{B}},{\bm Q}^{AB})=\sum_{a=1}^{|\mathcal{A}|}\tr\left[Q_a^{AB}\mathcal{J}(\mcE_a)^{AB}\right]$,} we obtain
\begin{align}
1+R({\bm\mcE}^{\AtoB{A}{B}})&=\max_{{\bm Q}^{AB}}\frac{\sum_{a=1}^{|\mathcal{A}|}\tr\left[Q_a^{AB}\mathcal{J}(\mcE_a)^{AB}\right]}{\max_{{\bm{\mathcal{L}}^{\AtoB{A}{B}}}\in\CY{\bf NI}}\sum_{a=1}^{|\mathcal{A}|}\tr\left[Q_a^{AB}\mathcal{J}(\mathcal{L}_a)^{AB}\right]}\nonumber\\
&=\max_{{\bm Q}^{AB}}\frac{P_{\rm succ}({\bm\mcE}^{\AtoB{A}{B}},{\bm Q}^{AB})}{\max_{{\bm{\mathcal{L}}}^{\AtoB{A}{B}}\in\CY{\bf NI}}P_{\rm succ}({\bm{\mathcal{L}}}^{\AtoB{A}{B}},{\bm Q}^{AB})}.
\end{align} 
This thus completes the proof.
\hfill$\square$\\

As a remark, recall from \eqref{Eq: trivial instrument Choi} that a \CY{non-interactive} instrument ${\bm{\mathcal{L}}}^{\AtoB{A}{B}}$ has its Choi representations as $\left({\rm id}^A\otimes\mathcal{L}_a^{\AtoB{A}{B}}\right)(\Phi_+^{AA})=r(a)(\mathds{1}^A/d_A)\otimes\hat{\tau}^B_a$ for some states $\hat{\tau}_a^B$ and a probability $r(a)$. 
We then note that
\begin{align}
&\max_{{\bm{\mathcal{L}}}^{\AtoB{A}{B}}\in\CY{\bf NI}}P_{\rm succ}({\bm{\mathcal{L}}}^{\AtoB{A}{B}},{\bm Q}^{AB})
=\max_{\substack{\{r(a)\}_a,\\\{\hat{\tau}_a^B\}_a}}\sum_{a=1}^{|\mathcal{A}|}r(a)\tr\left[\left(\frac{\tr_A(Q_a^{AB})}{d_A}\right)\hat{\tau}^B_a\right]\nonumber\\
&=\max_{\{r(a)\}_a}\sum_{a=1}^{|\mathcal{A}|}r(a)\norm{\frac{\tr_A(Q_a^{AB})}{d_A}}_\infty
=\max_{a\in\{1,...,|\mathcal{A}|\}}\frac{\norm{\tr_A\left(Q_a^{AB}\right)}_\infty}{d_A}.
\end{align}

\section{Proof of 
Result~\ref{Result:conversion iff conditions}
}\label{App:Proof-Result:conversion iff conditions}

Before the proof, \CYtwo{let us detail some necessary tools.
Recall from \eqref{free op} that the set of allowed operation can be written as ${\bf AO}={\rm conv}(\text{\bf pre-}{\bf AO})$ (where ``${\rm conv}$'' denotes the convex hull), where $\text{\bf pre-}{\bf AO}$ is the set of all transformations of the form:
\begin{align}\label{Eq:pre-AO}
\Pi({\bm\mcE})_a=\sum_kP(a|k)(\hat{\mathcal{D}}_k\circ\mcE_k\circ\hat{\mathcal{P}})(\cdot)\quad\forall\,a,{\bm\mcE},
\end{align}
where $\hat{\mathcal{P}}$ and $\hat{\mathcal{D}}_k$'s are all channels, and $\{P(a|k)\}_{a,k}$ are conditional probability distributions.
Similarly, for fixed input and output dimensions, the set ${\bf AO}^{\AtoB{A}{B}}={\rm conv}(\text{\bf pre-}{\bf AO}^{\AtoB{A}{B}})$, where $\text{\bf pre-}{\bf AO}^{\AtoB{A}{B}}$ is the set of transformations of the form \eqref{Eq:pre-AO} but maps any ${\bm\mcE}^{\AtoB{A}{B}}$ again to some instrument from $A$ to $B$: 
\begin{align}\label{Eq:pre-AO def}
\Pi({\bm\mcE}^{\AtoB{A}{B}})_a=\sum_kP(a|k)(\hat{\mathcal{D}}_k^B\circ\mcE_k^{\AtoB{A}{B}}\circ\hat{\mathcal{P}}^A)(\cdot)\quad\forall\,a,{\bm\mcE}^{\AtoB{A}{B}}.
\end{align}
It turns out that ${\bf AO}^{\AtoB{A}{B}}$ is compact.}
To formally make sense of this, we need to specify the topology.
Let us consider the following distance between two supermaps $\Pi,\Pi'$ [suppose their output instruments map from $A$ to $B$; also, we again use \CYtwo{``$\Pi({\bm\mcE})_a$''} to denote the $a$-th element of the instrument $\Pi({\bm\mcE})$]:
\begin{align}\label{Eq:metric}
&D(\Pi,\Pi')\coloneqq\nonumber\\
&\quad\max_{{\bm\mcE}}\sum_a\norm{\left[{\rm id}^A\otimes\left(\CYtwo{\Pi({\bm\mcE})_a-\Pi'({\bm\mcE})_a}\right)^{\AtoB{A}{B}}\right](\Phi_+^{AA})}_\infty,
\end{align}
which maximises over all \CYtwo{valid} input instruments.
Due to Choi representation~\cite{Choi1975,Jamiolkowski1972}, one can check that $D$ is a {\em metric}, namely, for every supermaps $\Pi,\Pi',\Pi''$, we have that:
(1) $D(\Pi,\Pi')\ge0$ and equality holds if and only if $\Pi=\Pi'$.
(2) $D(\Pi,\Pi')=D(\Pi',\Pi)$.
(3) (triangle inequality) $D(\Pi,\Pi'')\le D(\Pi,\Pi')+D(\Pi',\Pi'')$.
This metric can induce a metric topology.
Then, we have:\\

\begin{lemma}\label{lemma:compact}
Both $\text{\bf pre-}{\bf AO}^{\AtoB{A}{B}}$ and ${\bf AO}^{\AtoB{A}{B}}$ are compact in the metric topology induced by $D$.
\end{lemma}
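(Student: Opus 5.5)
The plan is to realise $\text{\bf pre-}{\bf AO}^{\AtoB{A}{B}}$ as the image of a compact parameter set under a continuous map, and then obtain ${\bf AO}^{\AtoB{A}{B}}$ as a convex hull.

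First fix the number of outcomes: the input instruments have $|\mathcal{K}|$ outcomes and the outputs have $|\mathcal{A}|$, so each $\Pi$ in \eqref{Eq:pre-AO def} is specified by a finite tuple
\[
\theta=\bigl(\hat{\mathcal{P}}^A,\{\hat{\mathcal{D}}_k^B\}_k,\{P(a|k)\}_{a,k}\bigr).
\]
Via Choi operators, channels on $A$ and channels on $B$ form compact convex subsets of finite-dimensional real vector spaces: they are closed and bounded, being cut out by positivity and a partial-trace equality. Conditional distributions form a product of simplices. Hence the parameter space $\Theta$ is a finite product of compact sets and is therefore compact.

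Next I show that the map $F:\theta\mapsto\Pi_\theta$ is continuous from $\Theta$ into the metric space of supermaps with metric $D$ from \eqref{Eq:metric}. For any instrument ${\bm\mcE}$, the difference
\[
\Pi_\theta({\bm\mcE})_a-\Pi_{\theta'}({\bm\mcE})_a
\]
is multilinear in $(P,\hat{\mathcal{D}}_k,\hat{\mathcal{P}})$, so a telescoping argument applies: change one factor at a time. Each term is then bounded by the distance between one pair of parameters, measured for instance in diamond norm, times a constant. The constant comes from the norms of the remaining factors ($\|\mcE_k\|_\diamond\le1$, channels have diamond norm $1$, $P\le1$). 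Since the Choi-operator operator norm is dominated by the diamond norm, this gives
\[
D(\Pi_\theta,\Pi_{\theta'})\le C\,\|\theta-\theta'\|,
\]
with $C$ depending only on $|\mathcal{A}|,|\mathcal{K}|$ and the dimensions, uniformly over ${\bm\mcE}$. Because the bound is uniform, the maximum over ${\bm\mcE}$ inside $D$ causes no trouble. So $F$ is Lipschitz, and $\text{\bf pre-}{\bf AO}^{\AtoB{A}{B}}=F(\Theta)$ is compact as the continuous image of a compact set.

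For ${\bf AO}^{\AtoB{A}{B}}={\rm conv}(\text{\bf pre-}{\bf AO}^{\AtoB{A}{B}})$, I will embed supermaps linearly into a finite-dimensional space. A supermap is determined by its action on the Choi operators of input instruments. The assignment $\Pi\mapsto$ (the linear map sending $\{\mathcal{J}(\mcE_k)\}_k$ to $\{\mathcal{J}(\Pi({\bm\mcE})_a)\}_a$) is an injective linear embedding into a finite-dimensional space, and on it $D$ is a norm restricted to differences. All norms on finite-dimensional spaces are equivalent, so the $D$-topology coincides with the Euclidean one. Carathéodory's theorem then represents ${\rm conv}(K)$, for compact $K$ in $\mathbb{R}^n$, as the continuous image of the compact set $\Delta_{n+1}\times K^{n+1}$, which gives compactness.

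The main obstacle is this last step: making rigorous that $D$ is genuinely a norm-induced metric on a finite-dimensional linear space containing these supermaps. This requires that the maximisation over ${\bm\mcE}$ is attained on a spanning compact set, and that $D(\Pi,\Pi')=0$ forces $\Pi=\Pi'$, which holds because instruments' Choi operators span the relevant operator space. I would handle this by extending each supermap linearly to all tuples of operators and noting that valid instruments span that space.
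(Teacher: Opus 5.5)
Your argument for $\text{\bf pre-}{\bf AO}^{\AtoB{A}{B}}$ follows the paper's. You use the same compact parameter domain: a pre-processing channel on $A$, post-processing channels on $B$, and conditional distributions. You also use the same three-term telescoping estimate to show that the parametrisation is Lipschitz into $D$. Bounding each term by diamond-norm submultiplicativity is a cleaner packaging of the paper's operator-norm and H\"older steps. For ${\bf AO}^{\AtoB{A}{B}}$, the paper only asserts that the convex hull of a compact set is compact. That needs finite-dimensionality, since it fails in general topological vector spaces. Your embedding into a finite-dimensional space together with Carath\'eodory supplies the justification the paper omits.

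One claim in that last step is false. For $d_A\ge2$, valid instruments do not span all tuples of operators. Their Choi operators satisfy $\sum_k{\rm tr}_B\left(\mathcal{J}(\mcE_k)^{AB}\right)=\mathds{1}^A/d_A$, so their span contains only tuples $(Y_k)_k$ with $\sum_k{\rm tr}_B(Y_k)\propto\mathds{1}^A$. On the space of linear extensions to all tuples, $D$ is therefore only a seminorm. For example, take $|\mathcal{K}|=1$ and a replacement post-processing channel. Two different pre-processing channels then give supermaps that agree on every channel but differ on non-trace-preserving maps.

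The repair is immediate. Regard supermaps as linear maps on the span of instruments only, which is how the paper identifies them anyway. Then $D$ is a genuine norm on a finite-dimensional space, and your argument runs unchanged. Alternatively, drop definiteness altogether. Any seminorm on a finite-dimensional space is Euclidean-continuous, and $F$ is polynomial in the parameters. So the set ${\rm conv}(F(\Theta))$, which is Euclidean-compact by Carath\'eodory, is automatically compact in the $D$-topology.
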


\noindent{\em Proof of Lemma~\ref{lemma:compact}.}
Since the convex hull of a compact set is again compact, it suffices to show $\text{\bf pre-}{\bf AO}^{\AtoB{A}{B}}$'s compactness.
Consider $a=1,...,|\mathcal{A}|<\infty\;\;\&\;\;k=1,...,|\mathcal{K}|<\infty$.
We also write the inverse Choi map as~\cite{Choi1975,Jamiolkowski1972}
\begin{align}
\mathcal{J}^{-1}(\hat\rho^{AB})\coloneqq d_A\tr_A\left[\left(((\cdot)^A)^t\otimes\mathds{1}^B\right)\hat\rho^{AB}\right],
\end{align}
where, as one can directly check,
\begin{align}\label{Eq:Choi Choi}
\left({\rm id}^A\otimes\mathcal{J}^{-1}(\hat\rho^{AB})\right)(\Phi_+^{AA})=\hat\rho^{AB}\quad\forall\,\hat\rho^{AB}:{\rm state}.
\end{align}
Let ${\rm Choi}(X)$ be the set of all Choi states for channels acting on a finite-dimensional system $X$ and ${\rm Prob}_{\mathcal{A}|\mathcal{K}}\subset[0,1]^{|\mathcal{A}||\mathcal{K}|}$ be the set of all conditional probability distributions $\{P(a|k)\}_{a,k}$.
In our current setting, these sets are all compact in the ordinary Euclidean topology (as they are bounded and closed sets of finite-dimensional matrices).
Hence, their Cartesian product, denoted as ${\rm Domain}^{A\to B}_{\mathcal{A},\mathcal{K}}\coloneqq{\rm Prob}_{\mathcal{A}|\mathcal{K}}\times{\rm Choi}(B)^{|\mathcal{K}|}\times{\rm Choi}(A)$, is again a compact set in the product topology, which is again the ordinary Euclidean topology.
Now, define a function $f$ that maps from ${\rm Domain}^{A\to B}_{\mathcal{A},\mathcal{K}}$ (equipped with the ordinary Euclidean topology) to the set of supermaps [equipped with the metric topology induced by $D$ defined in \eqref{Eq:metric}] as follows:
\begin{align}
f(\{P(a|k)\},\{\hat{\rho}_{k}\},\hat{\sigma})\coloneqq\Pi,
\end{align}
where $\Pi$ is a supermap that takes the form
\begin{align}
\Pi({\bm\mcE})_a\coloneqq\sum_{k}P(a|k)\left[\mathcal{J}^{-1}(\hat{\rho}_{k})\circ\mcE_k\circ\mathcal{J}^{-1}(\hat{\sigma})\right](\cdot)\quad\forall\,a.
\end{align}
Then, by comparing with \eqref{Eq:pre-AO def}, we have that
\begin{align}\label{Eq: by construction}
f\left({\rm Domain}^{A\to B}_{\mathcal{A},\mathcal{K}}\right)=\text{\bf pre-}{\bf AO}^{\AtoB{A}{B}}.
\end{align}
Since ${\rm Domain}^{A\to B}_{\mathcal{A},\mathcal{K}}$ is compact in the ordinary Euclidean topology, it remains to show that $f$ is continuous in $D$ to conclude the compactness of $\text{\bf pre-}{\bf AO}^{\AtoB{A}{B}}$ in $D$'s metric topology.
To this end, consider two given tuples $(\{P(a|k)\},\{\hat{\rho}_{k}\},\hat{\sigma})$ and $(\{P'(a|k)\},\{\hat{\rho}'_{k}\},\hat{\sigma}')$. We estimate the distance between the supermaps \mbox{$\Pi=f(\{P(a|k)\},\{\hat{\rho}_{k}\},\hat{\sigma})$} and \mbox{$\Pi'=f(\{P'(a|k)\},\{\hat{\rho}'_{k}\},\hat{\sigma}')$} as follows (note that the maximisation for $D$ is now taken over instruments ${\bm\mcE}^{\AtoB{A}{B}}$ from $A$ to $B$)
\begin{widetext}
\begin{align}\label{Eq:estimate part one}
D(\Pi,\Pi')&\coloneqq\max_{{\bm\mcE}^{\AtoB{A}{B}}}\sum_a\norm{\left[{\rm id}^A\otimes\left(\Pi(\mcE_a)-\Pi'(\mcE_a)\right)^{\AtoB{A}{B}}\right](\Phi_+^{AA})}_\infty\nonumber\\
&=\max_{{\bm\mcE}^{\AtoB{A}{B}}}\sum_a\norm{\sum_{k}\left[{\rm id}^A\otimes\Bigg(P(a|k)\left[\mathcal{J}^{-1}(\hat{\rho}_{k})\circ\mcE_k\circ\mathcal{J}^{-1}(\hat{\sigma})\right]-P'(a|k)\left[\mathcal{J}^{-1}(\hat{\rho}'_{k})\circ\mcE_k\circ\mathcal{J}^{-1}(\hat{\sigma}')\right]\Bigg)^{\AtoB{A}{B}}\right](\Phi_+^{AA})}_\infty\nonumber\\
&\le\max_{{\bm\mcE}^{\AtoB{A}{B}}}\sum_{a,k}\Bigg\{\big|P(a|k)-P'(a|k)\big|\norm{\left[{\rm id}^A\otimes\left[\mathcal{J}^{-1}(\hat{\rho}_{k})\circ\mcE_k\circ\mathcal{J}^{-1}(\hat{\sigma})\right]^{\AtoB{A}{B}}\right](\Phi_+^{AA})}_\infty\nonumber\\
&\quad\quad\quad\quad\quad\quad+P'(a|k)\norm{\left[{\rm id}^A\otimes\left[\big(\mathcal{J}^{-1}(\hat{\rho}_{k})-\mathcal{J}^{-1}(\hat{\rho}'_{k})\big)\circ\mcE_k\circ\mathcal{J}^{-1}(\hat{\sigma})\right]^{\AtoB{A}{B}}\right](\Phi_+^{AA})}_\infty\nonumber\\
&\quad\quad\quad\quad\quad\quad+P'(a|k)\norm{\left[{\rm id}^A\otimes\left[\mathcal{J}^{-1}(\hat{\rho}'_{k})\circ\mcE_k\circ\Big(\mathcal{J}^{-1}(\hat{\sigma})-\mathcal{J}^{-1}(\hat{\sigma}')\Big)\right]^{\AtoB{A}{B}}\right](\Phi_+^{AA})}_\infty\Bigg\}\nonumber\\
&\le\max_{{\bm\mcE}^{\AtoB{A}{B}}}\Bigg\{\sum_{a,k}\big|P(a|k)-P'(a|k)\big|+\sum_{k}\norm{\left[{\rm id}^A\otimes\left[\big(\mathcal{J}^{-1}(\hat{\rho}_{k})-\mathcal{J}^{-1}(\hat{\rho}'_{k})\big)\circ\mcE_k\circ\mathcal{J}^{-1}(\hat{\sigma})\right]^{\AtoB{A}{B}}\right](\Phi_+^{AA})}_\infty\nonumber\\
&\quad\quad\quad\quad\quad\quad+\sum_{k}\norm{\left[{\rm id}^A\otimes\left[\mathcal{J}^{-1}(\hat{\rho}'_{k})\circ\mcE_k\circ\Big(\mathcal{J}^{-1}(\hat{\sigma})-\mathcal{J}^{-1}(\hat{\sigma}')\Big)\right]^{\AtoB{A}{B}}\right](\Phi_+^{AA})}_\infty\Bigg\},
\end{align}
\end{widetext}
where we have used the triangle inequality and $\sum_aP(a|k)=1=\sum_aP'(a|k)$ $\forall\,k$.
Now, we note that, for any linear map $\Delta$ and completely-positive trace-non-increasing linear map with operator sum $\mathcal{L}(\cdot)=\sum_i L_i(\cdot)L_i^\dagger$, we have (we drop the system dependence here for simplicity)
\begin{align}
&\norm{[{\rm id}\otimes(\Delta\circ\mathcal{L})](\Phi_+)}_\infty=\norm{(\mathcal{L}^t\otimes\Delta)(\Phi_+)}_\infty\nonumber\\
&\coloneqq\max_{\hat{\eta}}\left|{\rm tr}\left[(\mathcal{L}^{t,\dagger}\otimes{\rm id})(\hat{\eta})({\rm id}\otimes\Delta)(\Phi_+)\right]\right|\nonumber\\
&\le\norm{({\rm id}\otimes\Delta)(\Phi_+)}_\infty,
\end{align} 
where $\mathcal{L}^t(\cdot)\coloneqq\sum_i L_i^t(\cdot)L_i^{t,\dagger}$ and hence $\mathcal{L}^{t,\dagger}$ is again trace-non-increasing, and 
\begin{align}
&\norm{[{\rm id}\otimes(\mathcal{L}\circ\Delta)](\Phi_+)}_\infty
\nonumber\\
&\coloneqq\max_{\hat{\eta}}\Big|{\rm tr}\left[({\rm id}\otimes\mathcal{L}^\dagger)(\hat{\eta})({\rm id}\otimes\Delta)(\Phi_+)\right]\Big|
\nonumber\\
&\le\max_{\hat{\eta}}\norm{({\rm id}\otimes\mathcal{L}^\dagger)(\hat{\eta})({\rm id}\otimes\Delta)(\Phi_+)}_1\nonumber\\
&\le\max_{\hat{\eta}}\norm{({\rm id}\otimes\mathcal{L}^\dagger)(\hat{\eta})}_\infty\norm{({\rm id}\otimes\Delta)(\Phi_+)}_1\nonumber\\
&=\max_{\hat{\eta},\hat{\eta}'}\Big|{\rm tr}\left[({\rm id}\otimes\mathcal{L})(\hat{\eta}')\hat{\eta}\right]\Big|\norm{({\rm id}\otimes\Delta)(\Phi_+)}_1\nonumber\\
&\le\norm{({\rm id}\otimes\Delta)(\Phi_+)}_1,
\end{align}
where we have used $|{\rm tr}(P)|\le{\rm tr}|P|\eqqcolon\norm{P}_1$ and the H\"older's inequality $\norm{PQ}_1\le\norm{P}_\infty\norm{Q}_1$ for every (linear bounded) operators $P,Q$.
Consequently, \eqref{Eq:estimate part one} can be further upper bounded as
\begin{align}
D(\Pi,\Pi')&\le\sum_{a,k}\big|P(a|k)-P'(a|k)\big|\nonumber\\
&\quad+\sum_{k}\norm{\left[{\rm id}^B\otimes\big(\mathcal{J}^{-1}(\hat{\rho}_{k})-\mathcal{J}^{-1}(\hat{\rho}'_{k})\big)^{B}\right](\Phi_+^{BB})}_\infty\nonumber\\
&\quad+|\mathcal{K}|\norm{\left[{\rm id}^A\otimes\big(\mathcal{J}^{-1}(\hat{\sigma})-\mathcal{J}^{-1}(\hat{\sigma}')\big)^{A}\right](\Phi_+^{AA})}_1\nonumber\\
&\le\sum_{a,k}\big|P(a|k)-P'(a|k)\big|\nonumber\\
&\quad+\sum_{k}\norm{\hat{\rho}_{k}-\hat{\rho}'_{k}}_1+|\mathcal{K}|\norm{\hat{\sigma}-\hat{\sigma}'}_1,
\end{align}
where \eqref{Eq:Choi Choi} has been used.
Hence, the function $f$ from ${\rm Domain}^{A\to B}_{\mathcal{A},\mathcal{K}}$ (equipped with the ordinary Euclidean topology) to $f\left({\rm Domain}^{A\to B}_{\mathcal{A},\mathcal{K}}\right)$ (equipped with topology induced by $D$) is indeed continuous, which concludes the proof.
\hfill$\square$\\

\CYtwo{Finally, let us recall} the following theorem from Sion~\cite{Sion1958} (see also Ref.~\cite{Hidetoshi1988}):

\newpage

\begin{theorem}{\em(Sion's Minimax Theorem~\cite{Sion1958})}\label{Sion_Theorem}
Suppose $X$ is a compact convex subset of a linear topological space, and $Y$ is a convex subset of a (possibly different) linear topological space.
Let $f:X\times Y\to\mathbb{R}$ be a real-valued function such that
\begin{itemize}
\item For every $x$, the function $f(x,\cdot)$ is upper semi-continuous and quasi-concave on $Y$.
\item For every $y$, the function $f(\cdot,y)$ is lower semi-continuous and quasi-convex on $X$.
\end{itemize}
Then we have
\begin{align}
\min_{x\in X}\sup_{y\in Y}f(x,y) = \sup_{y\in Y}\min_{x\in X}f(x,y).
\end{align}
\end{theorem}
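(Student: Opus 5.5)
Sion's theorem is a classical result that the paper only cites, so I would follow Komiya's elementary proof, which needs no fixed-point or KKM machinery. The inequality $\sup_{y}\min_{x}f(x,y)\le\min_{x}\sup_{y}f(x,y)$ is immediate. The minima over $X$ exist because each $f(\cdot,y)$, and also the pointwise supremum $\sup_y f(\cdot,y)$, is lower semi-continuous on the compact set $X$. For the reverse inequality I argue by contradiction. Suppose there is a real $\alpha$ with $\sup_{y}\min_{x}f(x,y)<\alpha<\min_{x}\sup_{y}f(x,y)$.

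The key step, and the one I expect to be the main obstacle, is a two-point lemma: if $y_1,y_2\in Y$ and $\alpha<\min_{x}\max\{f(x,y_1),f(x,y_2)\}$, then there is $y_0\in Y$ with $\alpha<\min_x f(x,y_0)$. I would prove it by contradiction. Choose $\beta$ with $\alpha<\beta<\min_x\max\{f(x,y_1),f(x,y_2)\}$. For each $z$ on the segment $[y_1,y_2]\subset Y$, define $A_z\coloneqq\{x:f(x,z)\le\alpha\}$ and $A'_z\coloneqq\{x:f(x,z)\le\beta\}$. These sets have the following properties.
\begin{itemize}
\item They are closed, by lower semi-continuity in $x$.
\item They are convex, by quasi-convexity in $x$.
\item Under the contradiction hypothesis they are nonempty for every $z$.
\end{itemize}
By quasi-concavity in $y$, every $x$ satisfies $f(x,z)\ge\min\{f(x,y_1),f(x,y_2)\}$. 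Combined with the choice of $\beta$, this shows $A'_z$ is contained in $A'_{y_1}\cup A'_{y_2}$, and these two sets are disjoint. Since $A'_z$ is convex, hence connected, it lies entirely in one of them.

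This splits $[y_1,y_2]$ into the sets $I=\{z: A'_z\subseteq A'_{y_1}\}$ and $J=\{z:A'_z\subseteq A'_{y_2}\}$. They are disjoint and nonempty, since $y_1\in I$ and $y_2\in J$. The delicate part is showing that both are closed in the segment, which contradicts connectedness of the segment. Take $z_n\to z$ with $z_n\in I$, and pick $x\in A_z$. Upper semi-continuity in $y$ gives $\limsup_n f(x,z_n)\le f(x,z)\le\alpha<\beta$, so $x\in A'_{z_n}\subseteq A'_{y_1}$ for large $n$. Then $A_z\subseteq A'_z$ meets $A'_{y_1}$, which forces $z\in I$. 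The same argument works for $J$. The gap between $\alpha$ and $\beta$ is exactly what makes this semi-continuity argument go through.

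Next I extend the lemma by induction to finitely many points: if $\alpha<\min_x\max_{i\le n}f(x,y_i)$, then $\alpha<\min_x f(x,y_0)$ for some $y_0$. For the inductive step, restrict $x$ to the compact convex set $X'=\{x: f(x,y_n)\le\alpha\}$. If $X'$ is empty, take $y_0=y_n$. Otherwise, apply the induction hypothesis on $X'$ to $y_1,\dots,y_{n-1}$ to obtain $y'$, and then apply the two-point lemma to the pair $y',y_n$ on $X$.

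Finally I use compactness. The hypothesis $\alpha<\min_x\sup_y f(x,y)$ says that the closed sets $\{x:f(x,y)\le\alpha\}$, $y\in Y$, have empty intersection in the compact space $X$. By the finite intersection property, finitely many $y_1,\dots,y_n$ already have empty intersection, which means $\alpha<\min_x\max_i f(x,y_i)$. The finite lemma then gives $y_0$ with $\min_x f(x,y_0)>\alpha$. This contradicts $\sup_y\min_x f(x,y)<\alpha$ and completes the proof.
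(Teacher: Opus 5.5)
Your proposal is correct and takes essentially the same approach as the paper: it faithfully reconstructs Komiya's elementary proof (two-point lemma via connectedness of the segment, induction to finitely many points, then the finite intersection property), which is exactly the reference the paper cites next to Sion's original, since the paper states the theorem without proof and uses it only as a black box for the conversion criterion. One detail to make explicit: run the sequential closedness argument for $I$ and $J$ on the parameter $t\in[0,1]$ of the segment rather than on the segment itself, because a general linear topological space need not be first countable; this is legitimate because $t\mapsto(1-t)y_1+ty_2$ is continuous, so $t\mapsto f\bigl(x,(1-t)y_1+ty_2\bigr)$ is still upper semi-continuous.
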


Now, we can detail the proof of 
Result~\ref{Result:conversion iff conditions}.\\

\noindent{\em Proof of 
Result~\ref{Result:conversion iff conditions}.}
First, if $\Pi({\bm\mcE}^{\AtoB{A}{B}})={\bm\mcN}^{\AtoB{A}{B}}$ for some \mbox{$\Pi\in{\bf AO}^{\AtoB{A}{B}}$}, then \eqref{Eq:iff} holds by definition. Hence, it suffices to prove the opposite direction. Suppose \eqref{Eq:iff} holds. Then we have, for every POVM ${\bf Q}^{AB}$,
\begin{align}
\max_{\Pi\in{\bf AO}^{\AtoB{A}{B}}}P_{\rm succ}[\Pi({\bm\mcE}^{\AtoB{A}{B}}),{\bm Q}^{AB}]&\ge\max_{\Pi\in{\bf AO}^{\AtoB{A}{B}}}P_{\rm succ}[\Pi({\bm\mcN}^{\AtoB{A}{B}}),{\bm Q}^{AB}]\nonumber\\
&\ge P_{\rm succ}({\bm\mcN}^{\AtoB{A}{B}},{\bm Q}^{AB}).
\end{align}
Moving everyting to the left-hand side and minimising over ${\bf Q}^{AB}$, we obtain
\begin{align}
\min_{{\bf Q}^{AB}}\left(\max_{\Pi\in{\bf AO}^{\AtoB{A}{B}}}P_{\rm succ}\left[\Pi({\bm\mcE}^{\AtoB{A}{B}}),{\bm Q}^{AB}\right]-P_{\rm succ}\left({\bm\mcN}^{\AtoB{A}{B}},{\bm Q}^{AB}\right)\right)\ge0.
\end{align}
Using the definition of $P_{\rm succ}$ [as given in \eqref{Eq: P_succ}], this can be rewritten as
\begin{align}
&\min_{{\bf Q}^{AB}}\max_{\Pi\in{\bf AO}^{\AtoB{A}{B}}}f({\bf Q}^{AB},\Pi)\coloneqq\nonumber\\
&\min_{{\bf Q}^{AB}}\max_{\Pi\in{\bf AO}^{\AtoB{A}{B}}}\sum_{a=1}^{|\mathcal{A}|}\tr\left(Q_a^{AB}\left[{\rm id}^A\otimes(\CYtwo{\Pi({\bm\mcE})_a}-\mathcal{N}_a)^{\AtoB{A}{B}}\right](\Phi_+^{AA})\right)\ge0,
\end{align}
where we define the function
\begin{align}
f({\bf Q}^{AB},\Pi)\coloneqq\sum_{a=1}^{|\mathcal{A}|}\tr\left(Q_a^{AB}\left[{\rm id}^A\otimes(\CYtwo{\Pi({\bm\mcE})_a}-\mathcal{N}_a)^{\AtoB{A}{B}}\right](\Phi_+^{AA})\right). 
\end{align}
Then one can check that $f({\bf Q}^{AB},\cdot)$ and $f(\cdot,\Pi)$ are both linear, meaning that it satisfies all the conditions required by Sion's Minimax Theorem (Thoerem~\ref{Sion_Theorem}). 
Moreover, \CYtwo{${\bf AO}^{\AtoB{A}{B}}={\rm conv}(\text{\bf pre-}{\bf AO}^{\AtoB{A}{B}})$} is convex, and the set of all strategies ${\bf Q}^{AB}$ is compact and convex. 
This means that Sion's Minimax Theorem (Thoerem~\ref{Sion_Theorem}) can be applied, resulting in 
\begin{align}\label{Eq: conversion proof 001 SM}
\max_{\Pi\in{\bf AO}^{\AtoB{A}{B}}}\min_{{\bf Q}^{AB}}f({\bf Q}^{AB},\Pi)\ge0.
\end{align}
Now, we define another function on supermaps
\begin{align}
g(\Pi)\coloneqq\min_{{\bf Q}^{AB}}f({\bf Q}^{AB},\Pi).
\end{align} One can further check that $g(\cdot)$ is a continuous function in the metric topology \CYtwo{induced by the metric $D$} defined in \eqref{Eq:metric}.
Formally, this can be seen by the following calculation for any two supermaps $\Pi,\Pi'$ [without loss of generality, we can assume $g(\Pi)\ge g(\Pi')$]:
\begin{align}
&\big|g(\Pi)-g(\Pi')\big|=g(\Pi)-g(\Pi')\nonumber\\
&=\min_{{\bf Q}^{AB}}f({\bf Q}^{AB},\Pi)-\min_{\widetilde{\bf Q}^{AB}}f(\widetilde{\bf Q}^{AB},\Pi')\nonumber\\
&=\min_{{\bf Q}^{AB}}f({\bf Q}^{AB},\Pi)+\max_{\widetilde{\bf Q}^{AB}}\left[-f(\widetilde{\bf Q}^{AB},\Pi')\right]\nonumber\\
&=\max_{\widetilde{\bf Q}^{AB}}\left[\min_{{\bf Q}^{AB}}f({\bf Q}^{AB},\Pi)-f(\widetilde{\bf Q}^{AB},\Pi')\right]\nonumber\\
&\le\max_{\widetilde{\bf Q}^{AB}}\left[f(\widetilde{\bf Q}^{AB},\Pi)-f(\widetilde{\bf Q}^{AB},\Pi')\right]\nonumber\\
&=\max_{\widetilde{\bf Q}^{AB}}\sum_{a=1}^{|\mathcal{A}|}\tr\left(\widetilde{Q}_a^{AB}\left[{\rm id}^A\otimes\CYtwo{\left(\Pi({\bm\mcE})_a-\Pi'({\bm\mcE})_a\right)^{\AtoB{A}{B}}}\right](\Phi_+^{AA})\right)\nonumber\\
&\CYtwo{\le d_Ad_B\max_{\widetilde{\bf Q}^{AB}}\sum_{a=1}^{|\mathcal{A}|}\tr\left(\frac{\widetilde{Q}_a^{AB}}{{\rm tr}(\widetilde{Q}_a^{AB})}\left[{\rm id}^A\otimes\CYtwo{\left(\Pi({\bm\mcE})_a-\Pi'({\bm\mcE})_a\right)^{\AtoB{A}{B}}}\right](\Phi_+^{AA})\right)}\nonumber\\
&\le d_Ad_B\sum_{a=1}^{|\mathcal{A}|}\norm{\left[{\rm id}^A\otimes\CYtwo{\left(\Pi({\bm\mcE})_a-\Pi'({\bm\mcE})_a\right)^{\AtoB{A}{B}}}\right](\Phi_+^{AA})}_\infty\nonumber\\
&\le d_Ad_BD(\Pi,\Pi'),
\end{align}
meaning that $g$ is (Lipschitz) continuous in the metric $D$ defined in \eqref{Eq:metric}.
Note that, in the fourth line, we use the fact that $\min_{{\bf Q}^{AB}}f({\bf Q}^{AB},\Pi)$ is a constant for the maximisation over $\widetilde{\bf Q}^{AB}$.
In the seventh line, we use the fact that ${\rm tr}(\widetilde{Q}_a^{AB})\le d_Ad_B$.

Now, in the same metric topology induced by \eqref{Eq:metric}, $g$ is a continuous function on the compact set ${\bf AO}^{\AtoB{A}{B}}$ \CYtwo{(whose compactness is from Lemma~\ref{lemma:compact})}, meaning that there is at least one $\Pi_*\in{\bf AO}^{\AtoB{A}{B}}$ achieving the maximum, namely, \mbox{$g(\Pi_*)=\max_{\Pi\in{\bf AO}^{\AtoB{A}{B}}}g(\Pi)$}. 
Hence, \eqref{Eq: conversion proof 001 SM} becomes
\begin{align}
\min_{{\bf Q}^{AB}}\sum_{a=1}^{|\mathcal{A}|}\tr\left(Q_a^{AB}\left[{\rm id}^A\otimes(\CYtwo{\Pi_*({\bm\mcE})_a}-\mathcal{N}_a)^{\AtoB{A}{B}}\right](\Phi_+^{AA})\right)\ge0.
\end{align}
One can observe that, for each fixed index $a$ and an arbitrary operator $0\preceq P^{AB}\preceq\mathds{1}^{AB}$, the set ${\bf Q}^{AB}=\{Q_m^{AB}\}_m$ with $Q_m^{AB} = \delta_{am}P^{AB}$ and $Q_{\emptyset}^{AB}=\mathds{1}^{AB}-P^{AB}$ is a valid POVM. Hence, the above inequality implies that, for every $a$ and every operator $0\preceq P^{AB}\preceq\mathds{1}^{AB}$, 
\begin{align}
\tr\left(P^{AB}\left[{\rm id}^A\otimes(\Pi_*({\bm\mcE})_a-\mathcal{N}_a)^{\AtoB{A}{B}}\right](\Phi_+^{AA})\right)\ge0.
\end{align}
The above is equivalent to
\begin{align}\label{ineq01}
\left[{\rm id}^A\otimes(\CYtwo{\Pi_*({\bm\mcE})_a}-\mathcal{N}_a)^{\AtoB{A}{B}}\right](\Phi_+^{AA})
\succeq0\quad\forall\,a.
\end{align}
In particular, this also means its trace is non-negative; namely,
\begin{align}\label{Eq: trace non-negative condition SM}
\tr\Big(\left[{\rm id}^A\otimes(\CYtwo{\Pi_*({\bm\mcE})_a}-\mathcal{N}_a)^{\AtoB{A}{B}}\right](\Phi_+^{AA})\Big)\ge0\quad\forall\,a.
\end{align}
Now, we argue that the above needs to be an {\em equality} for {\em every} $a$.
Suppose this was not the case, meaning that there is an index $a_\star$ such that the trace is strictly positive, 
$\tr\Big(\left[{\rm id}^A\otimes(\CYtwo{\Pi_*({\bm\mcE})_{a_\star}}-\mathcal{N}_{a_\star})^{\AtoB{A}{B}}\right](\Phi_+^{AA})\Big)>0$.
But then we recall that $\sum_a\tr\Big(\left[{\rm id}^A\otimes\CYtwo{\Pi_*({\bm\mcE}^{\AtoB{A}{B}})_{a}}\right](\Phi_+^{AA})\Big)=1=\sum_a\tr\Big(\left[{\rm id}^A\otimes\mathcal{N}_{a}^{\AtoB{A}{B}}\right](\Phi_+^{AA})\Big)$ since both $\Pi({\bm\mcE}^{\AtoB{A}{B}})$ and ${\bm\mcN}^{\AtoB{A}{B}}$ are instruments.
Namely, we must have $\sum_a\tr\Big(\left[{\rm id}^A\otimes(\CYtwo{\Pi_*({\bm\mcE})_a}-\mathcal{N}_a)^{\AtoB{A}{B}}\right](\Phi_+^{AA})\Big)=0$.
The existence of $a_\star$ thus implies the existence of another index $b$ such that 
$\tr\Big(\left[{\rm id}^A\otimes(\CYtwo{\Pi_*({\bm\mcE})_{b}}-\mathcal{N}_{b})^{\AtoB{A}{B}}\right](\Phi_+^{AA})\Big)<0$, leading to a contradiction with \eqref{Eq: trace non-negative condition SM}.
From here, we conclude that, for every $a$,
\begin{align}
\tr\Big(\left[{\rm id}^A\otimes\Pi_*({\bm\mcE}^{\AtoB{A}{B}})_{a}\right](\Phi_+^{AA})\Big)=\tr\Big(\left[{\rm id}^A\otimes\mathcal{N}_{a}^{\AtoB{A}{B}}\right](\Phi_+^{AA})\Big).
\end{align}
Together with \eqref{ineq01}, we further conclude that, for every $a$, the operator $\left[{\rm id}^A\otimes(\CYtwo{\Pi_*({\bm\mcE})_a}-\mathcal{N}_a)^{\AtoB{A}{B}}\right](\Phi_+^{AA})$
is trace-less and semi-definite positive---the only such operator is the zero operator. Finally, since this is a Choi operator~\cite{Choi1975,Jamiolkowski1972}, this means that $[\CYtwo{\Pi_*({\bm\mcE})_a}-\mathcal{N}_a]^{\AtoB{A}{B}}(\cdot)$ is the zero map, thereby implying that $\CYtwo{\Pi_*({\bm\mcE}^{\AtoB{A}{B}})_a(\cdot)}=\mathcal{N}_a^{\AtoB{A}{B}}(\cdot)$ for every $a$ and completing the proof.
\hfill$\square$

\bibliography{Ref.bib}

\begin{thebibliography}{38}%
\makeatletter
\providecommand \@ifxundefined [1]{%
 \@ifx{#1\undefined}
}%
\providecommand \@ifnum [1]{%
 \ifnum #1\expandafter \@firstoftwo
 \else \expandafter \@secondoftwo
 \fi
}%
\providecommand \@ifx [1]{%
 \ifx #1\expandafter \@firstoftwo
 \else \expandafter \@secondoftwo
 \fi
}%
\providecommand \natexlab [1]{#1}%
\providecommand \enquote  [1]{``#1''}%
\providecommand \bibnamefont  [1]{#1}%
\providecommand \bibfnamefont [1]{#1}%
\providecommand \citenamefont [1]{#1}%
\providecommand \href@noop [0]{\@secondoftwo}%
\providecommand \href [0]{\begingroup \@sanitize@url \@href}%
\providecommand \@href[1]{\@@startlink{#1}\@@href}%
\providecommand \@@href[1]{\endgroup#1\@@endlink}%
\providecommand \@sanitize@url [0]{\catcode `\\12\catcode `\$12\catcode `\&12\catcode `\#12\catcode `\^12\catcode `\_12\catcode `\%12\relax}%
\providecommand \@@startlink[1]{}%
\providecommand \@@endlink[0]{}%
\providecommand \url  [0]{\begingroup\@sanitize@url \@url }%
\providecommand \@url [1]{\endgroup\@href {#1}{\urlprefix }}%
\providecommand \urlprefix  [0]{URL }%
\providecommand \Eprint [0]{\href }%
\providecommand \doibase [0]{https://doi.org/}%
\providecommand \selectlanguage [0]{\@gobble}%
\providecommand \bibinfo  [0]{\@secondoftwo}%
\providecommand \bibfield  [0]{\@secondoftwo}%
\providecommand \translation [1]{[#1]}%
\providecommand \BibitemOpen [0]{}%
\providecommand \bibitemStop [0]{}%
\providecommand \bibitemNoStop [0]{.\EOS\space}%
\providecommand \EOS [0]{\spacefactor3000\relax}%
\providecommand \BibitemShut  [1]{\csname bibitem#1\endcsname}%
\let\auto@bib@innerbib\@empty
\bibitem [{\citenamefont {Chitambar}\ and\ \citenamefont {Gour}(2019)}]{ChitambarRMP2019}%
  \BibitemOpen
  \bibfield  {author} {\bibinfo {author} {\bibfnamefont {E.}~\bibnamefont {Chitambar}}\ and\ \bibinfo {author} {\bibfnamefont {G.}~\bibnamefont {Gour}},\ }\bibfield  {title} {\bibinfo {title} {Quantum resource theories},\ }\href {https://doi.org/10.1103/RevModPhys.91.025001} {\bibfield  {journal} {\bibinfo  {journal} {Rev. Mod. Phys.}\ }\textbf {\bibinfo {volume} {91}},\ \bibinfo {pages} {025001} (\bibinfo {year} {2019})}\BibitemShut {NoStop}%
\bibitem [{\citenamefont {Khandelwal}\ and\ \citenamefont {Tavakoli}(2025)}]{Khandelwal2025PRL}%
  \BibitemOpen
  \bibfield  {author} {\bibinfo {author} {\bibfnamefont {S.}~\bibnamefont {Khandelwal}}\ and\ \bibinfo {author} {\bibfnamefont {A.}~\bibnamefont {Tavakoli}},\ }\bibfield  {title} {\bibinfo {title} {Simulating quantum instruments with projective measurements and quantum postprocessing},\ }\href {https://doi.org/10.1103/bhr5-g71p} {\bibfield  {journal} {\bibinfo  {journal} {Phys. Rev. Lett.}\ }\textbf {\bibinfo {volume} {135}},\ \bibinfo {pages} {040202} (\bibinfo {year} {2025})}\BibitemShut {NoStop}%
\bibitem [{\citenamefont {G\"uhne}\ \emph {et~al.}(2023)\citenamefont {G\"uhne}, \citenamefont {Haapasalo}, \citenamefont {Kraft}, \citenamefont {Pellonp\"a\"a},\ and\ \citenamefont {Uola}}]{Otfried2021Rev}%
  \BibitemOpen
  \bibfield  {author} {\bibinfo {author} {\bibfnamefont {O.}~\bibnamefont {G\"uhne}}, \bibinfo {author} {\bibfnamefont {E.}~\bibnamefont {Haapasalo}}, \bibinfo {author} {\bibfnamefont {T.}~\bibnamefont {Kraft}}, \bibinfo {author} {\bibfnamefont {J.-P.}\ \bibnamefont {Pellonp\"a\"a}},\ and\ \bibinfo {author} {\bibfnamefont {R.}~\bibnamefont {Uola}},\ }\bibfield  {title} {\bibinfo {title} {Colloquium: Incompatible measurements in quantum information science},\ }\href {https://doi.org/10.1103/RevModPhys.95.011003} {\bibfield  {journal} {\bibinfo  {journal} {Rev. Mod. Phys.}\ }\textbf {\bibinfo {volume} {95}},\ \bibinfo {pages} {011003} (\bibinfo {year} {2023})}\BibitemShut {NoStop}%
\bibitem [{\citenamefont {Hsieh}\ \emph {et~al.}()\citenamefont {Hsieh}, \citenamefont {Uola},\ and\ \citenamefont {Skrzypczyk}}]{Hsieh-IP}%
  \BibitemOpen
  \bibfield  {author} {\bibinfo {author} {\bibfnamefont {C.-Y.}\ \bibnamefont {Hsieh}}, \bibinfo {author} {\bibfnamefont {R.}~\bibnamefont {Uola}},\ and\ \bibinfo {author} {\bibfnamefont {P.}~\bibnamefont {Skrzypczyk}},\ }\href@noop {} {\bibinfo {title} {Quantum complementarity: A novel resource for unambiguous exclusion and encryption}},\ \Eprint {https://arxiv.org/abs/2309.11968} {arXiv:2309.11968} \BibitemShut {NoStop}%
\bibitem [{\citenamefont {Skrzypczyk}\ \emph {et~al.}(2019)\citenamefont {Skrzypczyk}, \citenamefont {\ifmmode \check{S}\else \v{S}\fi{}upi\ifmmode~\acute{c}\else \'{c}\fi{}},\ and\ \citenamefont {Cavalcanti}}]{Skrzypczyk2019}%
  \BibitemOpen
  \bibfield  {author} {\bibinfo {author} {\bibfnamefont {P.}~\bibnamefont {Skrzypczyk}}, \bibinfo {author} {\bibfnamefont {I.}~\bibnamefont {\ifmmode \check{S}\else \v{S}\fi{}upi\ifmmode~\acute{c}\else \'{c}\fi{}}},\ and\ \bibinfo {author} {\bibfnamefont {D.}~\bibnamefont {Cavalcanti}},\ }\bibfield  {title} {\bibinfo {title} {All sets of incompatible measurements give an advantage in quantum state discrimination},\ }\href {https://doi.org/10.1103/PhysRevLett.122.130403} {\bibfield  {journal} {\bibinfo  {journal} {Phys. Rev. Lett.}\ }\textbf {\bibinfo {volume} {122}},\ \bibinfo {pages} {130403} (\bibinfo {year} {2019})}\BibitemShut {NoStop}%
\bibitem [{\citenamefont {Tavakoli}\ and\ \citenamefont {Uola}(2020)}]{Armin2020}%
  \BibitemOpen
  \bibfield  {author} {\bibinfo {author} {\bibfnamefont {A.}~\bibnamefont {Tavakoli}}\ and\ \bibinfo {author} {\bibfnamefont {R.}~\bibnamefont {Uola}},\ }\bibfield  {title} {\bibinfo {title} {Measurement incompatibility and steering are necessary and sufficient for operational contextuality},\ }\href {https://doi.org/10.1103/PhysRevResearch.2.013011} {\bibfield  {journal} {\bibinfo  {journal} {Phys. Rev. Res.}\ }\textbf {\bibinfo {volume} {2}},\ \bibinfo {pages} {013011} (\bibinfo {year} {2020})}\BibitemShut {NoStop}%
\bibitem [{\citenamefont {Buscemi}\ \emph {et~al.}(2020)\citenamefont {Buscemi}, \citenamefont {Chitambar},\ and\ \citenamefont {Zhou}}]{Buscemi2020PRL}%
  \BibitemOpen
  \bibfield  {author} {\bibinfo {author} {\bibfnamefont {F.}~\bibnamefont {Buscemi}}, \bibinfo {author} {\bibfnamefont {E.}~\bibnamefont {Chitambar}},\ and\ \bibinfo {author} {\bibfnamefont {W.}~\bibnamefont {Zhou}},\ }\bibfield  {title} {\bibinfo {title} {Complete resource theory of quantum incompatibility as quantum programmability},\ }\href {https://doi.org/10.1103/PhysRevLett.124.120401} {\bibfield  {journal} {\bibinfo  {journal} {Phys. Rev. Lett.}\ }\textbf {\bibinfo {volume} {124}},\ \bibinfo {pages} {120401} (\bibinfo {year} {2020})}\BibitemShut {NoStop}%
\bibitem [{\citenamefont {Ji}\ and\ \citenamefont {Chitambar}(2024)}]{Ji2024PRXQ}%
  \BibitemOpen
  \bibfield  {author} {\bibinfo {author} {\bibfnamefont {K.}~\bibnamefont {Ji}}\ and\ \bibinfo {author} {\bibfnamefont {E.}~\bibnamefont {Chitambar}},\ }\bibfield  {title} {\bibinfo {title} {Incompatibility as a resource for programmable quantum instruments},\ }\href {https://doi.org/10.1103/PRXQuantum.5.010340} {\bibfield  {journal} {\bibinfo  {journal} {PRX Quantum}\ }\textbf {\bibinfo {volume} {5}},\ \bibinfo {pages} {010340} (\bibinfo {year} {2024})}\BibitemShut {NoStop}%
\bibitem [{\citenamefont {Takagi}\ and\ \citenamefont {Regula}(2019)}]{Takagi2019}%
  \BibitemOpen
  \bibfield  {author} {\bibinfo {author} {\bibfnamefont {R.}~\bibnamefont {Takagi}}\ and\ \bibinfo {author} {\bibfnamefont {B.}~\bibnamefont {Regula}},\ }\bibfield  {title} {\bibinfo {title} {General resource theories in quantum mechanics and beyond: Operational characterization via discrimination tasks},\ }\href {https://doi.org/10.1103/PhysRevX.9.031053} {\bibfield  {journal} {\bibinfo  {journal} {Phys. Rev. X}\ }\textbf {\bibinfo {volume} {9}},\ \bibinfo {pages} {031053} (\bibinfo {year} {2019})}\BibitemShut {NoStop}%
\bibitem [{\citenamefont {Oszmaniec}\ and\ \citenamefont {Biswas}(2019)}]{Oszmaniec2019Quantum}%
  \BibitemOpen
  \bibfield  {author} {\bibinfo {author} {\bibfnamefont {M.}~\bibnamefont {Oszmaniec}}\ and\ \bibinfo {author} {\bibfnamefont {T.}~\bibnamefont {Biswas}},\ }\bibfield  {title} {\bibinfo {title} {Operational relevance of resource theories of quantum measurements},\ }\href {https://doi.org/10.22331/q-2019-04-26-133} {\bibfield  {journal} {\bibinfo  {journal} {{Quantum}}\ }\textbf {\bibinfo {volume} {3}},\ \bibinfo {pages} {133} (\bibinfo {year} {2019})}\BibitemShut {NoStop}%
\bibitem [{\citenamefont {Ducuara}\ and\ \citenamefont {Skrzypczyk}(2020)}]{Ducuara2020PRL}%
  \BibitemOpen
  \bibfield  {author} {\bibinfo {author} {\bibfnamefont {A.~F.}\ \bibnamefont {Ducuara}}\ and\ \bibinfo {author} {\bibfnamefont {P.}~\bibnamefont {Skrzypczyk}},\ }\bibfield  {title} {\bibinfo {title} {Operational interpretation of weight-based resource quantifiers in convex quantum resource theories},\ }\href {https://doi.org/10.1103/PhysRevLett.125.110401} {\bibfield  {journal} {\bibinfo  {journal} {Phys. Rev. Lett.}\ }\textbf {\bibinfo {volume} {125}},\ \bibinfo {pages} {110401} (\bibinfo {year} {2020})}\BibitemShut {NoStop}%
\bibitem [{\citenamefont {Uola}\ \emph {et~al.}(2020)\citenamefont {Uola}, \citenamefont {Bullock}, \citenamefont {Kraft}, \citenamefont {Pellonp\"a\"a},\ and\ \citenamefont {Brunner}}]{Uola2020PRL}%
  \BibitemOpen
  \bibfield  {author} {\bibinfo {author} {\bibfnamefont {R.}~\bibnamefont {Uola}}, \bibinfo {author} {\bibfnamefont {T.}~\bibnamefont {Bullock}}, \bibinfo {author} {\bibfnamefont {T.}~\bibnamefont {Kraft}}, \bibinfo {author} {\bibfnamefont {J.-P.}\ \bibnamefont {Pellonp\"a\"a}},\ and\ \bibinfo {author} {\bibfnamefont {N.}~\bibnamefont {Brunner}},\ }\bibfield  {title} {\bibinfo {title} {All quantum resources provide an advantage in exclusion tasks},\ }\href {https://doi.org/10.1103/PhysRevLett.125.110402} {\bibfield  {journal} {\bibinfo  {journal} {Phys. Rev. Lett.}\ }\textbf {\bibinfo {volume} {125}},\ \bibinfo {pages} {110402} (\bibinfo {year} {2020})}\BibitemShut {NoStop}%
\bibitem [{\citenamefont {Ducuara}\ and\ \citenamefont {Skrzypczyk}(2022)}]{Ducuara2022PRXQ}%
  \BibitemOpen
  \bibfield  {author} {\bibinfo {author} {\bibfnamefont {A.~F.}\ \bibnamefont {Ducuara}}\ and\ \bibinfo {author} {\bibfnamefont {P.}~\bibnamefont {Skrzypczyk}},\ }\bibfield  {title} {\bibinfo {title} {Characterization of quantum betting tasks in terms of arimoto mutual information},\ }\href {https://doi.org/10.1103/PRXQuantum.3.020366} {\bibfield  {journal} {\bibinfo  {journal} {PRX Quantum}\ }\textbf {\bibinfo {volume} {3}},\ \bibinfo {pages} {020366} (\bibinfo {year} {2022})}\BibitemShut {NoStop}%
\bibitem [{\citenamefont {Ku}\ \emph {et~al.}(2022)\citenamefont {Ku}, \citenamefont {Hsieh}, \citenamefont {Chen}, \citenamefont {Chen},\ and\ \citenamefont {Budroni}}]{Ku2022NC}%
  \BibitemOpen
  \bibfield  {author} {\bibinfo {author} {\bibfnamefont {H.-Y.}\ \bibnamefont {Ku}}, \bibinfo {author} {\bibfnamefont {C.-Y.}\ \bibnamefont {Hsieh}}, \bibinfo {author} {\bibfnamefont {S.-L.}\ \bibnamefont {Chen}}, \bibinfo {author} {\bibfnamefont {Y.-N.}\ \bibnamefont {Chen}},\ and\ \bibinfo {author} {\bibfnamefont {C.}~\bibnamefont {Budroni}},\ }\bibfield  {title} {\bibinfo {title} {Complete classification of steerability under local filters and its relation with measurement incompatibility},\ }\href {https://doi.org/10.1038/s41467-022-32466-y} {\bibfield  {journal} {\bibinfo  {journal} {Nat. Commun.}\ }\textbf {\bibinfo {volume} {13}},\ \bibinfo {pages} {4973} (\bibinfo {year} {2022})}\BibitemShut {NoStop}%
\bibitem [{\citenamefont {Uola}\ \emph {et~al.}(2015)\citenamefont {Uola}, \citenamefont {Budroni}, \citenamefont {G\"uhne},\ and\ \citenamefont {Pellonp\"a\"a}}]{Uola2015PRL}%
  \BibitemOpen
  \bibfield  {author} {\bibinfo {author} {\bibfnamefont {R.}~\bibnamefont {Uola}}, \bibinfo {author} {\bibfnamefont {C.}~\bibnamefont {Budroni}}, \bibinfo {author} {\bibfnamefont {O.}~\bibnamefont {G\"uhne}},\ and\ \bibinfo {author} {\bibfnamefont {J.-P.}\ \bibnamefont {Pellonp\"a\"a}},\ }\bibfield  {title} {\bibinfo {title} {One-to-one mapping between steering and joint measurability problems},\ }\href {https://doi.org/10.1103/PhysRevLett.115.230402} {\bibfield  {journal} {\bibinfo  {journal} {Phys. Rev. Lett.}\ }\textbf {\bibinfo {volume} {115}},\ \bibinfo {pages} {230402} (\bibinfo {year} {2015})}\BibitemShut {NoStop}%
\bibitem [{\citenamefont {Cavalcanti}\ and\ \citenamefont {Skrzypczyk}(2016)}]{Cavalcanti2016}%
  \BibitemOpen
  \bibfield  {author} {\bibinfo {author} {\bibfnamefont {D.}~\bibnamefont {Cavalcanti}}\ and\ \bibinfo {author} {\bibfnamefont {P.}~\bibnamefont {Skrzypczyk}},\ }\bibfield  {title} {\bibinfo {title} {Quantum steering: {A} review with focus on semidefinite programming},\ }\href {https://doi.org/10.1088/1361-6633/80/2/024001} {\bibfield  {journal} {\bibinfo  {journal} {Rep. Prog. Phys.}\ }\textbf {\bibinfo {volume} {80}},\ \bibinfo {pages} {024001} (\bibinfo {year} {2016})}\BibitemShut {NoStop}%
\bibitem [{\citenamefont {Skrzypczyk}\ and\ \citenamefont {Linden}(2019)}]{Skrzypczyk2019PRL}%
  \BibitemOpen
  \bibfield  {author} {\bibinfo {author} {\bibfnamefont {P.}~\bibnamefont {Skrzypczyk}}\ and\ \bibinfo {author} {\bibfnamefont {N.}~\bibnamefont {Linden}},\ }\bibfield  {title} {\bibinfo {title} {Robustness of measurement, discrimination games, and accessible information},\ }\href {https://doi.org/10.1103/PhysRevLett.122.140403} {\bibfield  {journal} {\bibinfo  {journal} {Phys. Rev. Lett.}\ }\textbf {\bibinfo {volume} {122}},\ \bibinfo {pages} {140403} (\bibinfo {year} {2019})}\BibitemShut {NoStop}%
\bibitem [{\citenamefont {Hsieh}\ and\ \citenamefont {Chen}(2024)}]{Hsieh2024PRL}%
  \BibitemOpen
  \bibfield  {author} {\bibinfo {author} {\bibfnamefont {C.-Y.}\ \bibnamefont {Hsieh}}\ and\ \bibinfo {author} {\bibfnamefont {S.-L.}\ \bibnamefont {Chen}},\ }\bibfield  {title} {\bibinfo {title} {Thermodynamic approach to quantifying incompatible instruments},\ }\href {https://doi.org/10.1103/PhysRevLett.133.170401} {\bibfield  {journal} {\bibinfo  {journal} {Phys. Rev. Lett.}\ }\textbf {\bibinfo {volume} {133}},\ \bibinfo {pages} {170401} (\bibinfo {year} {2024})}\BibitemShut {NoStop}%
\bibitem [{\citenamefont {Ghai}\ and\ \citenamefont {Mitra}(2025)}]{Ghai2025}%
  \BibitemOpen
  \bibfield  {author} {\bibinfo {author} {\bibfnamefont {J.}~\bibnamefont {Ghai}}\ and\ \bibinfo {author} {\bibfnamefont {A.}~\bibnamefont {Mitra}},\ }\href@noop {} {\bibinfo {title} {Instrument-based quantum resources: quantification, hierarchies and towards constructing resource theories}} (\bibinfo {year} {2025}),\ \Eprint {https://arxiv.org/abs/2508.09134} {arXiv:2508.09134 [quant-ph]} \BibitemShut {NoStop}%
\bibitem [{\citenamefont {Buscemi}\ \emph {et~al.}(2023)\citenamefont {Buscemi}, \citenamefont {Kobayashi}, \citenamefont {Minagawa}, \citenamefont {Perinotti},\ and\ \citenamefont {Tosini}}]{Buscemi2023}%
  \BibitemOpen
  \bibfield  {author} {\bibinfo {author} {\bibfnamefont {F.}~\bibnamefont {Buscemi}}, \bibinfo {author} {\bibfnamefont {K.}~\bibnamefont {Kobayashi}}, \bibinfo {author} {\bibfnamefont {S.}~\bibnamefont {Minagawa}}, \bibinfo {author} {\bibfnamefont {P.}~\bibnamefont {Perinotti}},\ and\ \bibinfo {author} {\bibfnamefont {A.}~\bibnamefont {Tosini}},\ }\bibfield  {title} {\bibinfo {title} {Unifying different notions of quantum incompatibility into a strict hierarchy of resource theories of communication},\ }\href {https://doi.org/10.22331/q-2023-06-07-1035} {\bibfield  {journal} {\bibinfo  {journal} {{Quantum}}\ }\textbf {\bibinfo {volume} {7}},\ \bibinfo {pages} {1035} (\bibinfo {year} {2023})}\BibitemShut {NoStop}%
\bibitem [{\citenamefont {Lepp{\"{a}}j{\"{a}}rvi}\ and\ \citenamefont {Sedl{\'{a}}k}(2024)}]{Leppajarvi2024Quantum}%
  \BibitemOpen
  \bibfield  {author} {\bibinfo {author} {\bibfnamefont {L.}~\bibnamefont {Lepp{\"{a}}j{\"{a}}rvi}}\ and\ \bibinfo {author} {\bibfnamefont {M.}~\bibnamefont {Sedl{\'{a}}k}},\ }\bibfield  {title} {\bibinfo {title} {Incompatibility of quantum instruments},\ }\href {https://doi.org/10.22331/q-2024-02-12-1246} {\bibfield  {journal} {\bibinfo  {journal} {{Quantum}}\ }\textbf {\bibinfo {volume} {8}},\ \bibinfo {pages} {1246} (\bibinfo {year} {2024})}\BibitemShut {NoStop}%
\bibitem [{\citenamefont {Kuroiwa}\ \emph {et~al.}(2024)\citenamefont {Kuroiwa}, \citenamefont {Takagi}, \citenamefont {Adesso},\ and\ \citenamefont {Yamasaki}}]{Kuroiwa2024PRA}%
  \BibitemOpen
  \bibfield  {author} {\bibinfo {author} {\bibfnamefont {K.}~\bibnamefont {Kuroiwa}}, \bibinfo {author} {\bibfnamefont {R.}~\bibnamefont {Takagi}}, \bibinfo {author} {\bibfnamefont {G.}~\bibnamefont {Adesso}},\ and\ \bibinfo {author} {\bibfnamefont {H.}~\bibnamefont {Yamasaki}},\ }\bibfield  {title} {\bibinfo {title} {Robustness- and weight-based resource measures without convexity restriction: Multicopy witness and operational advantage in static and dynamical quantum resource theories},\ }\href {https://doi.org/10.1103/PhysRevA.109.042403} {\bibfield  {journal} {\bibinfo  {journal} {Phys. Rev. A}\ }\textbf {\bibinfo {volume} {109}},\ \bibinfo {pages} {042403} (\bibinfo {year} {2024})}\BibitemShut {NoStop}%
\bibitem [{\citenamefont {Boyd}\ and\ \citenamefont {Vandenberghe}(2004)}]{Boyd-book}%
  \BibitemOpen
  \bibfield  {author} {\bibinfo {author} {\bibfnamefont {S.}~\bibnamefont {Boyd}}\ and\ \bibinfo {author} {\bibfnamefont {L.}~\bibnamefont {Vandenberghe}},\ }\href@noop {} {\emph {\bibinfo {title} {Convex Optimization}}}\ (\bibinfo  {publisher} {Cambridge University Press},\ \bibinfo {year} {2004})\BibitemShut {NoStop}%
\bibitem [{\citenamefont {Skrzypczyk}\ and\ \citenamefont {Cavalcanti}(2023)}]{SDP-textbook}%
  \BibitemOpen
  \bibfield  {author} {\bibinfo {author} {\bibfnamefont {P.}~\bibnamefont {Skrzypczyk}}\ and\ \bibinfo {author} {\bibfnamefont {D.}~\bibnamefont {Cavalcanti}},\ }\href {https://doi.org/10.1088/978-0-7503-3343-6} {\emph {\bibinfo {title} {Semidefinite Programming in Quantum Information Science}}},\ 2053-2563\ (\bibinfo  {publisher} {IOP Publishing, Bristol},\ \bibinfo {year} {2023})\BibitemShut {NoStop}%
\bibitem [{\citenamefont {Takagi}\ \emph {et~al.}(2020)\citenamefont {Takagi}, \citenamefont {Wang},\ and\ \citenamefont {Hayashi}}]{Takagi2020PRL}%
  \BibitemOpen
  \bibfield  {author} {\bibinfo {author} {\bibfnamefont {R.}~\bibnamefont {Takagi}}, \bibinfo {author} {\bibfnamefont {K.}~\bibnamefont {Wang}},\ and\ \bibinfo {author} {\bibfnamefont {M.}~\bibnamefont {Hayashi}},\ }\bibfield  {title} {\bibinfo {title} {Application of the resource theory of channels to communication scenarios},\ }\href {https://doi.org/10.1103/PhysRevLett.124.120502} {\bibfield  {journal} {\bibinfo  {journal} {Phys. Rev. Lett.}\ }\textbf {\bibinfo {volume} {124}},\ \bibinfo {pages} {120502} (\bibinfo {year} {2020})}\BibitemShut {NoStop}%
\bibitem [{\citenamefont {Horodecki}\ \emph {et~al.}(1999)\citenamefont {Horodecki}, \citenamefont {Horodecki},\ and\ \citenamefont {Horodecki}}]{Horodecki1999}%
  \BibitemOpen
  \bibfield  {author} {\bibinfo {author} {\bibfnamefont {M.}~\bibnamefont {Horodecki}}, \bibinfo {author} {\bibfnamefont {P.}~\bibnamefont {Horodecki}},\ and\ \bibinfo {author} {\bibfnamefont {R.}~\bibnamefont {Horodecki}},\ }\bibfield  {title} {\bibinfo {title} {General teleportation channel, singlet fraction, and quasidistillation},\ }\href {https://doi.org/10.1103/PhysRevA.60.1888} {\bibfield  {journal} {\bibinfo  {journal} {Phys. Rev. A}\ }\textbf {\bibinfo {volume} {60}},\ \bibinfo {pages} {1888} (\bibinfo {year} {1999})}\BibitemShut {NoStop}%
\bibitem [{\citenamefont {Konig}\ \emph {et~al.}(2009)\citenamefont {Konig}, \citenamefont {Renner},\ and\ \citenamefont {Schaffner}}]{Konig2009ITIT}%
  \BibitemOpen
  \bibfield  {author} {\bibinfo {author} {\bibfnamefont {R.}~\bibnamefont {Konig}}, \bibinfo {author} {\bibfnamefont {R.}~\bibnamefont {Renner}},\ and\ \bibinfo {author} {\bibfnamefont {C.}~\bibnamefont {Schaffner}},\ }\bibfield  {title} {\bibinfo {title} {The operational meaning of min- and max-entropy},\ }\href {https://doi.org/10.1109/TIT.2009.2025545} {\bibfield  {journal} {\bibinfo  {journal} {IEEE Trans. Inf. Theory}\ }\textbf {\bibinfo {volume} {55}},\ \bibinfo {pages} {4337} (\bibinfo {year} {2009})}\BibitemShut {NoStop}%
\bibitem [{\citenamefont {Datta}(2009)}]{Datta2009ITIT}%
  \BibitemOpen
  \bibfield  {author} {\bibinfo {author} {\bibfnamefont {N.}~\bibnamefont {Datta}},\ }\bibfield  {title} {\bibinfo {title} {Min- and max-relative entropies and a new entanglement monotone},\ }\href {https://doi.org/10.1109/TIT.2009.2018325} {\bibfield  {journal} {\bibinfo  {journal} {IEEE Trans. Inf. Theory}\ }\textbf {\bibinfo {volume} {55}},\ \bibinfo {pages} {2816} (\bibinfo {year} {2009})}\BibitemShut {NoStop}%
\bibitem [{\citenamefont {Nielsen}(2002)}]{Nielsen2002PLA}%
  \BibitemOpen
  \bibfield  {author} {\bibinfo {author} {\bibfnamefont {M.~A.}\ \bibnamefont {Nielsen}},\ }\bibfield  {title} {\bibinfo {title} {A simple formula for the average gate fidelity of a quantum dynamical operation},\ }\href {https://doi.org/https://doi.org/10.1016/S0375-9601(02)01272-0} {\bibfield  {journal} {\bibinfo  {journal} {Phys. Lett. A}\ }\textbf {\bibinfo {volume} {303}},\ \bibinfo {pages} {249} (\bibinfo {year} {2002})}\BibitemShut {NoStop}%
\bibitem [{\citenamefont {Buscemi}\ and\ \citenamefont {Datta}(2010)}]{Buscemi2010ITIT}%
  \BibitemOpen
  \bibfield  {author} {\bibinfo {author} {\bibfnamefont {F.}~\bibnamefont {Buscemi}}\ and\ \bibinfo {author} {\bibfnamefont {N.}~\bibnamefont {Datta}},\ }\bibfield  {title} {\bibinfo {title} {The quantum capacity of channels with arbitrarily correlated noise},\ }\href {https://doi.org/10.1109/TIT.2009.2039166} {\bibfield  {journal} {\bibinfo  {journal} {IEEE Trans. Inf. Theory}\ }\textbf {\bibinfo {volume} {56}},\ \bibinfo {pages} {1447} (\bibinfo {year} {2010})}\BibitemShut {NoStop}%
\bibitem [{\citenamefont {Choi}(1975)}]{Choi1975}%
  \BibitemOpen
  \bibfield  {author} {\bibinfo {author} {\bibfnamefont {M.-D.}\ \bibnamefont {Choi}},\ }\bibfield  {title} {\bibinfo {title} {Completely positive linear maps on complex matrices},\ }\href {https://doi.org/https://doi.org/10.1016/0024-3795(75)90075-0} {\bibfield  {journal} {\bibinfo  {journal} {Linear Algebr. Appl.}\ }\textbf {\bibinfo {volume} {10}},\ \bibinfo {pages} {285} (\bibinfo {year} {1975})}\BibitemShut {NoStop}%
\bibitem [{\citenamefont {Jamio{\l}kowski}(1972)}]{Jamiolkowski1972}%
  \BibitemOpen
  \bibfield  {author} {\bibinfo {author} {\bibfnamefont {A.}~\bibnamefont {Jamio{\l}kowski}},\ }\bibfield  {title} {\bibinfo {title} {Linear transformations which preserve trace and positive semidefiniteness of operators},\ }\href {https://doi.org/https://doi.org/10.1016/0034-4877(72)90011-0} {\bibfield  {journal} {\bibinfo  {journal} {Rep. Math. Phys.}\ }\textbf {\bibinfo {volume} {3}},\ \bibinfo {pages} {275} (\bibinfo {year} {1972})}\BibitemShut {NoStop}%
\bibitem [{\citenamefont {Slater}(2014)}]{Slater2014}%
  \BibitemOpen
  \bibfield  {author} {\bibinfo {author} {\bibfnamefont {M.}~\bibnamefont {Slater}},\ }\bibinfo {title} {Lagrange multipliers revisited},\ in\ \href {https://doi.org/10.1007/978-3-0348-0439-4_14} {\emph {\bibinfo {booktitle} {Traces and Emergence of Nonlinear Programming}}},\ \bibinfo {editor} {edited by\ \bibinfo {editor} {\bibfnamefont {G.}~\bibnamefont {Giorgi}}\ and\ \bibinfo {editor} {\bibfnamefont {T.~H.}\ \bibnamefont {Kjeldsen}}}\ (\bibinfo  {publisher} {Springer Basel},\ \bibinfo {address} {Basel},\ \bibinfo {year} {2014})\ pp.\ \bibinfo {pages} {293--306}\BibitemShut {NoStop}%
\bibitem [{\citenamefont {Watrous}(2018)}]{Watrous_2018}%
  \BibitemOpen
  \bibfield  {author} {\bibinfo {author} {\bibfnamefont {J.}~\bibnamefont {Watrous}},\ }\href {https://doi.org/10.1017/9781316848142} {\emph {\bibinfo {title} {The Theory of Quantum Information}}}\ (\bibinfo  {publisher} {Cambridge University Press, Cambridge, England},\ \bibinfo {year} {2018})\BibitemShut {NoStop}%
\bibitem [{\citenamefont {Tavakoli}\ \emph {et~al.}(2024)\citenamefont {Tavakoli}, \citenamefont {Pozas-Kerstjens}, \citenamefont {Brown},\ and\ \citenamefont {Ara\'ujo}}]{Tavakoli2024RMP}%
  \BibitemOpen
  \bibfield  {author} {\bibinfo {author} {\bibfnamefont {A.}~\bibnamefont {Tavakoli}}, \bibinfo {author} {\bibfnamefont {A.}~\bibnamefont {Pozas-Kerstjens}}, \bibinfo {author} {\bibfnamefont {P.}~\bibnamefont {Brown}},\ and\ \bibinfo {author} {\bibfnamefont {M.}~\bibnamefont {Ara\'ujo}},\ }\bibfield  {title} {\bibinfo {title} {Semidefinite programming relaxations for quantum correlations},\ }\href {https://doi.org/10.1103/RevModPhys.96.045006} {\bibfield  {journal} {\bibinfo  {journal} {Rev. Mod. Phys.}\ }\textbf {\bibinfo {volume} {96}},\ \bibinfo {pages} {045006} (\bibinfo {year} {2024})}\BibitemShut {NoStop}%
\bibitem [{\citenamefont {Harrow}(2013)}]{Harrow2013}%
  \BibitemOpen
  \bibfield  {author} {\bibinfo {author} {\bibfnamefont {A.~W.}\ \bibnamefont {Harrow}},\ }\href@noop {} {\bibinfo {title} {The church of the symmetric subspace}} (\bibinfo {year} {2013}),\ \Eprint {https://arxiv.org/abs/1308.6595} {arXiv:1308.6595 [quant-ph]} \BibitemShut {NoStop}%
\bibitem [{\citenamefont {Sion}(1958)}]{Sion1958}%
  \BibitemOpen
  \bibfield  {author} {\bibinfo {author} {\bibfnamefont {M.}~\bibnamefont {Sion}},\ }\bibfield  {title} {\bibinfo {title} {On general minimax theorems},\ }\href@noop {} {\bibfield  {journal} {\bibinfo  {journal} {Pac. J. Math.}\ }\textbf {\bibinfo {volume} {8}},\ \bibinfo {pages} {171} (\bibinfo {year} {1958})}\BibitemShut {NoStop}%
\bibitem [{\citenamefont {Komiya}(1988)}]{Hidetoshi1988}%
  \BibitemOpen
  \bibfield  {author} {\bibinfo {author} {\bibfnamefont {H.}~\bibnamefont {Komiya}},\ }\bibfield  {title} {\bibinfo {title} {{Elementary proof for Sion's minimax theorem}},\ }\href {https://doi.org/10.2996/kmj/1138038812} {\bibfield  {journal} {\bibinfo  {journal} {Kodai Math. J.}\ }\textbf {\bibinfo {volume} {11}},\ \bibinfo {pages} {5} (\bibinfo {year} {1988})}\BibitemShut {NoStop}%
\end{thebibliography}%

\end{document}